\newcommand{\intra}{{\rm intra}} %
\newcommand{\ctr}{{\rm ctl}} %
\newcommand{\EP}{{\frac{1}{\varepsilon}}} %
\newcommand{\EPt}{{\frac{t}{\varepsilon}}} %
\newcommand\PC[1]{^{(#1)}}
\newcommand\scalemath[2]{\scalebox{#1}{\mbox{\ensuremath{\displaystyle #2}}}}
\newcommand{\blkdiag}{{\rm blkdiag}} %
\newcommand{\bl}{\color{blue}}
\definecolor{ORG}{HTML}{fe9929}
\newcommand{\alias}{/Users/yuzhen.qin/Library/CloudStorage/Dropbox/bib/alias}
\newcommand{\New}{/Users/yuzhen.qin/Library/CloudStorage/Dropbox/bib/New}
\newcommand{\Main}{/Users/yuzhen.qin/Library/CloudStorage/Dropbox/bib/Main}
\newcommand{\FP}{/Users/yuzhen.qin/Library/CloudStorage/Dropbox/bib/FP}
\newcommand{\BE}{\mathbb{E}} 
\newcommand{\BS}{\mathbb{S}} 
\newcommand{\BG}{\mathbb{G}} 
\newcommand{\R}{\mathbb{R}} 
\newcommand{\BONE}{\mathbbm{1}} 
\newcommand{\BT}{\mathbb{T}} 
\newcommand{\BV}{\mathbb{V}} 
\newcommand{\BB}{\mathbb{B}} 
\newcommand{\BmW}{\bm{W}}
\newcommand{\BmV}{\bm{V}}
\newcommand{\BmZr}{\bm{0}}
\newcommand{\BmU}{\bm{U}}
\newcommand{\CP}{\mathcal{P}} %
\newcommand{\CR}{\mathcal{R}} %
\newcommand{\CC}{\mathcal{C}} %
\newcommand{\CG}{\mathcal{G}} %
\newcommand{\CE}{\mathcal{E}} %
\newcommand{\CM}{\mathcal{M}} %
\newcommand{\CV}{\mathcal{V}} %
\newcommand{\CD}{\mathcal{D}} %
\newcommand{\sign}{{\rm sign}}
\newcommand{\inter}{{\rm inter}}
\newcommand{\ctl}{{\rm ctl}}
\newcommand{\diag}{{\rm diag}} %
\newtheorem{theorem}{Theorem}
\newtheorem{lemma}{Lemma}
\newtheorem{corollary}{Corollary}
\theoremstyle{definition}
\newtheorem{remark}{Remark}
\newtheorem{definition}{Definition}
\newtheorem{assumption}{Assumption}
\newcommand*{\QE}{\hfill\ensuremath{\blacksquare}}	
\newcommand*{\QEDA}{\null \hfill\ensuremath{\triangle}}
\begin{document}

\sptitle{Article Category}

\title{Vibrational Stabilization of Cluster Synchronization in Oscillator Networks} 

\editor{This paper was recommended by Associate Editor F. A. Author.}

\author{Yuzhen Qin\affilmark{1}\orcidlink{0000-0003-1851-1370} (Member, IEEE)}

\author{Alberto Maria Nobili\affilmark{2}} 

\author{Danielle S. Bassett\affilmark{3}\orcidlink{0000-0002-6183-4493}  (Member, IEEE)}

\author{Fabio Pasqualetti\affilmark{4}\orcidlink{0000-0002-8457-8656} (Member, IEEE)}

\affil{Department of Artificial Intelligence, Donders Institute for Brain, Cognition, and Behaviour, Radboud University, Nijmegen, the Netherlands}

\affil{Perceptual Robotics Laboratory at the IIM Institute, Department of Excellence in Robotics and A.I., Scuola Superiore Sant’Anna, Pisa, Italy} 

\affil{ Department of
	Bioengineering, Department of Electrical \& Systems
	Engineering,  Department of Physics \& Astronomy, the
	Department of Psychiatry, and  Department of Neurology,
	University of Pennsylvania, and the Santa Fe Institute}
	
\affil{ Department of Mechanical Engineering, University of
	California, Riverside, CA, USA} 

\corresp{CORRESPONDING AUTHOR: Yuzhen Qin (e-mail: \href{mailto:yuzhen.qin@donders.ru.nl}{yuzhen.qin@donders.ru.nl})}
\authornote{This research was funded in part by Aligning Science Across	Parkinson’s (ASAP-020616) and in part by NSF (NCS-FO-1926829) and ARO (W911NF1910360).
	Qin's work in also funded in part by the project Dutch Brain Interface Initiative (DBI2) with project number 024.005.022 of the research programme Gravitation which is (partly) financed by the Dutch Research Council (NWO). 
	 Y. Qin, F. Pasqualetti and D. S. Bassett  were also with  Aligning Science Across Parkinson’s (ASAP) Collaborative Research Network, Chevy Chase, MD 20815. The manuscript was prepared using LaTex on Overleaf, and the source code can be downloaded at the Arxiv page. All simulations were performed in MATLAB, and the corresponding code is available at this repository: \hyperref[]{https://doi.org/10.5281/zenodo.17170702}}

\markboth{PREPARATION OF PAPERS FOR IEEE OPEN JOURNAL OF CONTROL SYSTEMS}{F. A. AUTHOR {\itshape ET AL}.}

\begin{abstract}
	Cluster synchronization is of great importance for the normal functioning of numerous technological and natural systems. Deviations from normal cluster synchronization patterns are closely associated with various malfunctions, such as neurological disorders in the brain. Therefore, it is crucial to restore normal system functions by stabilizing the appropriate cluster synchronization patterns. Most existing studies focus on designing controllers based on state measurements to achieve system stabilization. However, in many real-world scenarios, measuring system states in real time, such as neuronal activity in the brain, poses significant challenges, rendering the stabilization of such systems difficult. To overcome this challenge, in this paper, we employ an open-loop control strategy, \textit{vibrational control}, which does not require any state measurements. We establish some sufficient conditions under which vibrational inputs stabilize cluster synchronization. Further, we provide a tractable approach to design vibrational control. Finally, numerical experiments are conducted to demonstrate our theoretical findings.
\end{abstract}

\begin{IEEEkeywords}
	Vibrational Control, Cluster Synchronization, Oscillator Networks
\end{IEEEkeywords}

\maketitle

\section{Introduction}
Cluster synchronization describes the phenomenon in which units within a network exhibit synchronized behavior, forming distinct clusters. This intriguing phenomenon is widely observed in many natural and engineering systems. For instance, it manifests as correlated neural activity in the brain. Different patterns of cluster synchronization in the brain play a fundamental role in various functions, including neuronal communication, memory formation and retrieval, and motor function \cite{HG-PAA-DG-AA-KA:19,FJ-AN:2011}.  

However, many brain disorders, such as Parkinson's disease \cite{Hammond2007} and epilepsy \cite{JP-DCM-JJGR-SCA:2013}, are characterized by aberrant synchrony patterns of brain activity. It becomes crucial to be able to stabilize normal patterns of cluster synchronization. Most existing studies rely on the assumption that the states of systems can be measured to design feedback controllers to stabilize them. Unfortunately, many real-world systems often exhibit complex nonlinear dynamics over large network structures with states that are difficult to observe or measure in real time (such as neuronal activity in the brain), thus preventing the use of
sophisticated feedback techniques.

In this paper we leverage \emph{vibrational control} strategies to
ensure stability of network systems. Vibrational control is a powerful
strategy applicable in various domains to stabilize the dynamics of
complex systems without the need for direct state measurements. Unlike
traditional feedback-based control methods that rely on directly
measuring the system's states or outputs, vibrational control
leverages the inherent dynamics of the system to induce desired
stability and performance. In particular, vibrational control uses
pre-designed high-frequency signals, injected at specific locations
and times. As we show below, these signals can effectively change the
system dynamics and suppress unstable dynamics. Successful
applications of vibrational control to individual systems are
numerous, including inverted pendulums, chemical reactors, and
under-actuated robots \cite{REB-JB-SMM:86b,BS-BTZ:97,CX-TY-MI:2018}.
This paper takes the first steps to develop a theory of vibrational
control for network systems. Interestingly, in addition to its
technological value, the theory of vibrational control for network
systems may also help explain the success of deep brain stimulation
methods, as this technique also relies on the injection of
high-frequency electrical pulses to regulate brain processes and
restore healthy functions \cite{Krauss2021}.

We study the stabilization of networks of heterogeneous Kuramoto
oscillators, and in particular on the stabilization of the dynamics
around a desired cluster synchronization manifold
\cite{TM-GB-DSB-FP:18}. We remark that (i) Kuramoto-oscillator networks have been used successfully to model different phenomena
in diverse domains ranging from power engineering to biology and
neuroscience, thus making our results of broad applicability, and (ii)
cluster synchronization has been used as a proxy to model and regulate
the emergence of functional activation patterns in the brain
\cite{TM-GB-DSB-FP:19b,TM-GB-DSB-FP:22}, thus making our results of
timely relevance to these problems. 

\noindent
\textbf{Related work.}
Cluster synchronization has garnered significant attention recently as researchers seek to understand its underlying mechanisms and control strategies. Investigations into the field have revealed intriguing connections between cluster synchronization and network symmetries \cite{Pecora2014, YSC-TN-AEM:17, YQ-MC-BDOA-DSB-FP:20,EJ-SA-SJ-CJP-DRM:20} as well as equitable partitions \cite{Schaub2016}. Furthermore, stability conditions for cluster synchronization have been established in networks featuring dyadic connections \cite{TM-GB-DSB-FP:18, QY-KY-PO-CM:21, FP-SA-MT:21, FP-SA-MT:2020, KR-IH:2021} and hyper connections \cite{SA-DRM:2021a, SA-DRM:2021b}.
To control cluster synchronization, researchers have proposed diverse strategies, such as pinning control \cite{WW-WZ-TC:09} and interventions that involve manipulating network connections or the dynamics of individual nodes \cite{GLV-FM-LV:2018, DF-GR-MDB:17,AA-FC-FP-JC:22}. 
In contrast, our approach focuses on vibrational control, which offers a more realistic strategy in many real-world systems, e.g., for regulating neural activity as it resembles deep brain stimulation \cite{Krauss2021}. To the best of our knowledge, our work is among the first ones to utilize vibrations to regulate network systems.

\noindent
\textbf{Paper contributions.} The main contributions of this paper are
as follows. First, we formalize the problem of vibrational control for Kuramoto-oscillator networks, with the aim to stabilize patterns of cluster synchronization. By employing an averaging technique, we demonstrate that introducing vibrational inputs into the network effectively alters the system dynamics on average. We establish sufficient conditions for the effectiveness of vibrational control in stabilizing cluster synchronization within Kuramoto-oscillator networks. Through the analysis of linearized systems, we gain deep insights into the underlying mechanisms of vibrational control, revealing its ability to enhance the robustness of synchronization within clusters. Second, we establish a connection between the design of vibrational control in Kuramoto-oscillator networks and linear network systems. We show that vibrational control can effectively modify the weights of edges in linear network systems, thereby shaping their robustness. We develop graph-theoretical conditions for determining which edges can be modified through vibration. Additionally, we propose a systematic approach to designing vibrational control that targets the modifiable edges, aiming to enhance the overall robustness of the network system. Building on these findings, we apply the results to homogeneous Kuramoto-oscillator networks, presenting a tractable approach to designing vibrational control for improving the robustness of full synchronization. Furthermore, we extend the application of these results to heterogeneous Kuramoto-oscillator networks, deriving precise forms and placements of vibrational inputs to stabilize cluster synchronization. Finally, we conduct a numerical experiment to demonstrate our method for designing vibrational control to stabilize cluster synchronization in Kuramoto-oscillator networks.

A preliminary version of this work appeared in \cite{YQ-DSB-FP:22a}. Compared with it, this paper presents a more comprehensive approach to design vibrational control in Kuramoto-oscillator networks by deriving and utilizing precise forms and placements of vibrational inputs in linear network systems.

\textbf{Notation.} Denote the unit circle by $\BS^1$, and a point on it is called a \textit{phase} since the point can be used to indicate the phase angle of an oscillator.  For any two phases $\theta_1,\theta_2 \in \BS^1$, the geodesic distance between them is the minimum of the lengths of the counter-clockwise and clockwise arcs connecting them, which is denoted by $|\theta_1-\theta_2 |_\BS$. Given a matrix $B\in\R^{m\times n}$, the matrix $A = \sign(B)$ is defined in a way such that $a_{ij}=1$ if $b_{ij}>0$, $a_{ij}=-1$ if $b_{ij}<0$, and $a_{ij}=0$ if $b_{ij}=0$. Given a matrix $A$, $A^\dagger$ denote its pseudo-inverse.  Given a vector  $a\in \R^n$, $\diag(a)$ denotes the diagonal matrix formed by $a$. Given matrices $A_1, A_2,\dots,A_n$, $\blkdiag(A_1, A_2, \dots, A_n)$ denotes the block diagonal matrix formed by them. Denote $\otimes$ and $\odot$ as the Kronecker product and point-wise product, respectively.

\section{Problem Formulation}
\subsection{Kuramoto-Oscillator Networks}

Consider a network of $n$ Kuramoto oscillators with dynamics described by 
\begin{equation}\label{no_input}
	\dot \theta_i =\omega_i + \sum_{j=1}^{n} w_{ij} \sin(\theta_j- \theta_i), 
\end{equation}
where $\theta_i \in \BS^1$ is the phase of the $i$th oscillator, $\omega_i\in \R$ is its natural frequency for $i=1,\dots,n$, and $w_{ij}$ is the coupling strength. This paper investigates a network configuration where connections between nodes are bidirectional but allow for asymmetry (i.e., $w_{ji}\neq 0$ if $w_{ij}\neq 0$, and $w_{ij}\neq w_{ji}$ is allowed). This assumption represents a more flexible scenario compared to the commonly studied undirected networks in existing literature. We use a weighted directed graph  $\CG=(\CV,\CE,W)$ to describe the network structure,  where $\CV=\{1,2,\dots, n\}$, $\CE\subseteq \CV\times \CV$, and $W=[w_{ij}]_{n\times n}$ with $w_{ij}\ge 0$ is the weighted adjacency matrix. There is a directed edge from $j$ to $i$ in $\CG$, i.e., $(j,i)\in \CE$, if $w_{ij}>0$; this edge does not exist, i.e., $(j,i)\notin \CE$,  if $w_{ij}=0$.


In this paper, we are interest in studying cluster synchronization in the Kuramoto-oscillator network \eqref{no_input}. Let us first formally define cluster synchronization. 

For the graph $\CG=(\CV,\CE,W)$, consider the partition of the nodes in it below:
\begin{equation}\label{partition}
	\CP:=\{\CP_1,\CP_2,\dots,\CP_r\},
\end{equation}
where each $\CP_k$ is a subset of $\CV$ satisfying $\CP_k \cap  \CP_\ell =\emptyset$ for any $k\neq \ell$, and $\cup_{k=1}^r \CP_k=\CV$. 

\begin{definition}[\textit{Cluster Synchronization Manifold}]\label{CS:manifold}
	The \textit{cluster synchronization manifold} with respect to the partition $\CP$ is defined as
	\begin{equation*}
		\CM:= \{\theta \in \BT^n: \theta_i=\theta_j, \forall i,j \in \CP_k,  k= 1,\dots,r\}.\hspace{12pt}\QEDA
	\end{equation*}
\end{definition} 

Note that  various cluster synchronization patterns may emerge in the same network, each of which corresponds to a distinct network partition. Nevertheless, the partition defined in \eqref{partition} exhibits sufficient generality to capture any cluster synchronization pattern, enabling the analysis of different patterns of cluster synchronization. 

The manifold $\CM$ is invariant along the system \eqref{no_input} if, starting from $\theta(0)\in \CM$, the solution to \eqref{no_input} satisfies $\theta(t)\in \CM$ for all $t\ge 0$. We make the following assumption to ensure the invariance of $\CM$.

\begin{assumption}[\textit{Invariance}]\label{invariance}
	For $k=1,2,\dots, r$:  i) the natural frequencies satisfy $\omega_i=\omega_j$ for any $i,j \in \CP_k$; and ii) the coupling strengths satisfy that, for any $\ell\in\{1,2,\dots,r\}\backslash\{k\}$,  $\sum_{q\in \CP_\ell}(w_{iq}-w_{jq})=0$ for any $i,j \in \CP_k$. \QEDA
\end{assumption}

This assumption guarantees that oscillators within a cluster receive identical inputs from every other  synchronized cluster, a critical factor for maintaining synchronization among them. Similar assumptions are made for undirected networks in earlier studies \cite{TM-GB-DSB-FP:18,LT-CF-MI-DSB-FP:17}. 
To ensure the emergence of cluster synchronization, it is essential to not only establish invariance but also ensure the stability of $\CM$. Given a manifold $\mathcal C \in \BT^n$, define a $\delta$-neighborhood of $\mathcal C$ by $U_{\delta}(\mathcal C) = \{\theta \in \BT^n:{\rm dist}(\theta,\mathcal C)<\delta\}$ with ${\rm dist}(\theta,\mathcal C)=\inf _{y\in \mathcal C}{\|\theta-y\|_\BS}$. The  exponential stability of the manifold $\CM$ is defined below. 

\begin{definition}
	The manifold $\CM\in \BT^n$ is said to be exponentially stable  along the system \eqref{no_input}  if there is $\delta>0$ such that for any initial phase $\theta(0) \in\BT^ n$ satisfying $\theta(0)\in U_{\delta}( \CM)$ it holds that for all $t\ge 0$, ${\rm dist}(\theta(t),\CM)=k\cdot{\rm dist}(\theta(0), \mathcal C)\cdot e^{-\lambda t}$ for some $k>0$ and $\lambda>0$.
\end{definition}

Sufficient conditions are constructed for the exponential stability of $\CM$ (e.g., see \cite{TM-GB-DSB-FP:18,QY-KY-PO-CM:21,Schaub2016}). However, variations in network parameters due to factors, e.g., aging or brain disorders, can disrupt such conditions, resulting in the loss of stability of cluster synchronization. This paper focuses on the application of \textit{vibrational control}, a control strategy reminiscent of deep brain stimulation (DBS) \cite{Krauss2021}, to restore the stability of desired cluster synchronization patterns. Next, we present an introduction to vibrational control.

\subsection{Vibrational Control}
Following \cite{REB-JB-SMM:86a}, consider a nonlinear system
\begin{equation}\label{general}
	\dot x = f(x, a),
\end{equation} 
where $x\in\R^n$, and $a\in \R^{m}$ represents the parameters of the system. For linear systems, $a$ can be rewritten into a matrix form $A\in \R^{n\times n}$, and then we have $\dot x =A x$.  Vibrational control introduces vibrations to the parameters of \eqref{general}, resulting in
\begin{equation}\label{general:ctred}
	\dot x = f(x, a+v(t)).
\end{equation} 
The control vector $v(t)=[v_1, v_2, \dots, v_m]^\top\in \mathbb{R}^m$ is usually selected to have the following structure:
\begin{equation}\label{vib:gener_form}
	v_{i}(t)= \sum_{\ell=1}^{\infty} \alpha_{i}^{(\ell)} \sin(\ell \beta_{i}t+\varphi_{i}^{(\ell)}).
\end{equation}
which is almost-periodic, zero-mean, and high-frequency \cite{REB-JB-SMM:86b}. Vibrational control is an open-loop strategy. An appropriate configuration of vibrations can stabilize an unstable system \eqref{general} without any measurements of the states \cite{BS-BTZ:97,CX-TY-MI:2018,SMM:80}. 

\subsection{Vibrational Control in Kuramoto-Oscillator Networks}
In the Kuramoto-oscillator network described by \eqref{no_input}, the natural frequencies and coupling strengths can be taken as the parameters. This paper specifically focuses on injecting vibrations solely into the network connections, affecting their coupling strengths (i.e., the strengths of edges in the associated graph). This is inspired by the observation that deep brain stimulation predominantly affects dendrites and axons near the electrode, rather than the soma \cite{HMH-BV-RRC:09}. 

The control inputs exert their influence on the system in the way specified by
\begin{equation}\label{Kuramoto:controlled}
	\dot \theta_i =\omega_i + \sum_{j=1}^{n} \big(w_{ij}+v_{ij}(t)\big) \sin(\theta_j- \theta_i),
\end{equation}
where $v_{ij}(t)$ is the vibration introduced to the edge $(j,i)$. Particularly, we consider that each $v_{ij}(t)$ is simply sinusoidal, which naturally satisfies \eqref{vib:gener_form}, i.e.,
\begin{equation}\label{form:vib}
	v_{ij}(t)= \mu_{ij } \sin(\alpha_{ij} t),
\end{equation}
Let $V(t)=[v_{ij}(t)]_{n \times n}$.
Note that various types of vibrations can be utilized in practical applications. However, for the sake of analysis simplification, we just consider sinusoidal vibrations in this paper.
Here, $\mu_{ij} \in \mathbb{R}$ and $\alpha_{ij}>0$ determine the amplitude and frequency of the vibration injected to the connection $(i,j)$. Note that, since vibrations are usually high-frequency, $\mu_{ij}$ and $\alpha_{ij}$ are often rewritten into the form of $\mu_{ij}=u_{ij}/\varepsilon$ and $\alpha_{ij}=\beta_{ij}/\varepsilon$ to streamline the analysis, where $u_{ij}$ and $\beta_{ij}$ have the order of $1$, and $\varepsilon>0$ is a small constant. Let $u_{ij}(t)=u_{ij}\sin(\beta_{ij} t)$, and $v_{ij}(t)$ can be rewritten as $v_{ij}(t)=\frac{1}{\varepsilon} u_{ij}(\frac{t}{\varepsilon})$.


In contrast to the general system \eqref{general}, where vibrations can be applied to any parameter in $a$, the introduction of vibrational control to a network system is subject to the constraints imposed by the network structure. It is reasonable to assume that vibrations can only be introduced to connections that already exist in the network. Therefore, vibrational control satisfies that for any pair of $i,j$,
\begin{align}\label{constraints}
	&u_{ij}=0, &\text{ if } w_{ij}=0.
\end{align}

\textbf{Objective:} This paper aims to study how vibrational control satisfying both \eqref{form:vib} and \eqref{constraints} stabilizes the cluster synchronization represented by the manifold $\mathcal{M}$.

\section{Preliminary}\label{prelim}
\subsection{Graph-theoretical Notations}\label{graph:notation}
We first introduce some graph-theoretical notations, which are further elucidated in a more intuitive manner in Fig.~\ref{notation}.

For the directed graph $\CG=(\CV,\CE,W)$ that describes the network in \eqref{no_input}, denote the oriented incidence matrix as $B=[b_{k\ell}]\in\R^{n\times m}$, where 
\begin{equation*}
	b_{k\ell}=\begin{cases}
		-1, \text{ if the edge } e_\ell \text{ leave the node } k, \\
		1, \text{ if the edge } e_\ell \text{ enters the node } k, \\
		0, \text{otherwise}.
	\end{cases}
\end{equation*}

For the partition $\CP:=\{\CP_1,\CP_2,\dots,\CP_r\}$ of $\CG$, define $\CG_k=(\CP_k,\CE_k)$ where $\CE_k:=\{(i,j)\in \CE: i,j\in \CP_k\}$ for all $k=1,\dots, r$. Denote $n_k:=|\CP_k|$ as the number of nodes in $\CG_k$. We assume that each $\CG_k$ contains at least 2 nodes and is strongly connected. Let $\CG_\intra=(\CV,\CE_\intra)=\cup_{k=1}^r \CG_k$, and we refer to it as the \textit{intra-cluster subgraph}, describing the intra-cluster network structure. Similarly, let $\CG_{\inter}=(\CV,\CE_\inter)$ be the \textit{inter-cluster subgraph}, where $\CE_\inter :=\CE\backslash \CE_\intra$. Denote $B_\intra$ and $B_\inter$ as the oriented incidence matrices of $\CG_\intra$ and $\CG_\inter$, respectively.

 \begin{figure}[t]
	\centering
	\includegraphics[scale=1.8]{./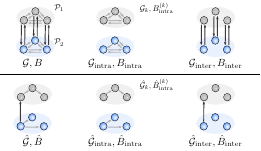}
	\caption{Summary of the main notations we use in this paper: different subgraphs and their corresponding incidence matrices. }
	\label{notation}
\end{figure}

Consider an arbitrary spanning tree in $\CG$, and denote it by $\hat \CG=(\CV,\hat \CE)$, where $|\hat \CE|=n-1$. Let $\hat B$ be the oriented incidence matrix of $\hat \CG$.
For each $k$, Let $\hat \CG_{k}=(\CP_k,\hat \CE_k)$ with  $\hat \CE_k:=\hat \CE\cap \CE_k$. Similar to earlier notations, denote the intra-cluster subgraph of  the spanning tree $\hat \CG$ by $\hat \CG_\intra=(\CV,\hat \CE_\intra):=\hat \cup_{k=1}^r \hat \CG_k$;  the inter-cluster subgraph  of $\hat \CG$ is denoted by $\hat \CG_\inter=(\CV,\hat \CE_\inter)$ with $\hat \CE_\inter:=\hat \CE\backslash \hat \CE_\intra$. Denote the incidence matrices of  $\hat \CG_\intra$ and $\hat \CG_\inter$ as $\hat B_\intra  \in \R^{n\times (n-r)}$ and $\hat B_\inter \in \R^{n \times (r-1)}$, respectively.

\subsection{Incremental Dynamics and Partial Stability}
To study cluster synchronization in \eqref{Kuramoto:controlled}, we look at its incremental dynamics. Specifically, we order the columns of the incidence matrix $\hat B$ of the spanning tree $\hat G$ in a way such that $\hat B=[\hat B_\intra, \hat B_\inter ]$, where $\hat B_\intra\in \R^{n\times n-r}$ and $\hat B_\inter \in \R^{n\times r-1}$ represents the intra- and inter-cluster subnetwork in the spanning tree $\hat G$, respectively. Now, let $x=\hat B_\intra ^\top\theta \in\R^{n-r}$ and $y=\hat B_\inter ^\top \theta \in \R^{r-1}$ to capture intra- and inter-cluster phase differences, respectively. Naturally, $x=0$ implies the cluster synchronization defined by $\CM$ since it indicates that phase differences of oscillators within each cluster are all $0$.
	
Now,  the incremental dynamics of the  Kuramoto-oscillator network can be derived as
\begin{subequations}\label{compact_form}
	\begin{align}
		&\dot x=f_\intra(x) +f_\inter(x,y)+ h_{\ctl}\Big(\frac{1}{\varepsilon}\BmU\big(\frac{t}{\varepsilon}\big),x,y\Big),\label{compact_form:1}\\
		&\dot y=g(x,y)+h'_{\ctl}\Big(\frac{1}{\varepsilon}\BmU\big(\frac{t}{\varepsilon}\big),x,y\Big),
	\end{align}
\end{subequations}
where $\BmU(t)=\diag ([u_{ij}(t)]_{(i,j)\in \CE})\in \R^{m\times m}$ captures the vibrations introduced to the edges in the network. The expressions of the functions on the right-hand side of \eqref{compact_form} are given in Appendix~\ref{derivatin}. We single out some important properties of them. The functions $f_\intra$ and $f_\inter$ results from intra- and inter-cluster couplings; it holds that $f_\intra(0)=0$ and $f_\inter(0,y)=0$ for any $y$. The functions $h_\ctr$ and $h'_\ctr$ are the consequences of vibrational inputs; when there is no control,  $h_{\ctl}(0,x,y)=0$ and $h'_{\ctl}(0,x,y)=0$ for any $x,y$.  

Notice that, the fact that  $f_\intra(x)+f_\inter(x,y)=0$ when $x=0$ signifies the invariance of cluster synchronization defined by $\CM$. To preserve this invariance, a vibrational control input needs to ensure $h_\ctr\big(\frac{1}{\varepsilon}\BmU\big(\frac{t}{\varepsilon}\big),0,y \big)=0$ for any $y$. To meet this requirement, one way is to assume a similar balanced condition as Assumption~\ref{invariance}, that is, one can consider that, given any $\ell$ and $k$, it holds for all $t\ge 0$  that 
\begin{equation}\label{ctr:balance}
	\sum_{q\in \CP_\ell}(u_{iq}(t)-u_{jq}(t))=0 \text{ for any } i,j \in \CP_k.
\end{equation}

This assumption ensures that $x=0$ is an equilibrium of the system \eqref{compact_form:1}, independent of $y$. Following \cite[Chap. 4]{haddad2011nonlinear},  such an equilibrium is called a \textit{partial equilibrium} of the system \eqref{compact_form}.  To ensure the stability of the cluster synchronization manifold $\CM$, one can focus on the stability of the partial equilibrium $x=0$ of the system \eqref{compact_form}, which reduces to studying its partial stability. Note that various forms of partial stability exist, such as partial asymptotic and exponential stability, as elaborated in \cite[Chap. 4]{haddad2011nonlinear}. However, for the scope of our discussion, we exclusively focus on partial exponential stability and therefore present its specific definition here.

\begin{definition}[Partial Exponential Stability {\cite[Chap. 4]{haddad2011nonlinear}}] \label{defi:stability}
	Given a system $\dot x =f(x,y), \dot y= g(x,y)$, where $x\in\R^n, y \in\R^m$,  a partial equilibrium point $x=0$ is \emph{exponentially $x$-stable} uniformly in $y$  if there exist $c_1,c_2,\delta>0$ such that $\|x(0)\|<\delta$ and any $y(0)\in \R^m$ imply that $ \|x(t)\|\le c_1 { \|x_0\|} e^ {-c_2t}$ for all $t\ge 0$. 
\end{definition}

Formally, the manifold $\CM$ is exponentially stable along the system \eqref{Kuramoto:controlled} if $x=0$ of the system \eqref{compact_form2} is partially exponentially stable. To stabilize the cluster synchronization manifold $\CM$, it suffices to design vibration control satisfying \eqref{ctr:balance} to ensure the partial exponential stability of  $x=0$.

\section{Vibrational Stabilization: General Results} \label{results:general}
In this paper, we specifically investigate a form of vibrational control in which vibrations are exclusively introduced to the intra-cluster connections, i.e.,   
\begin{equation}
	u_{ij}(t)=0, 
\end{equation}
for any $i,j$ from different clusters. Then, the requirement \eqref{ctr:balance} is naturally met since  $\sum_{q\in \CP_\ell}u_{iq}(t)=0$ for any $i$. As a result, the system \eqref{compact_form} reduces to (see details in Appendix~\ref{derivatin})
\begin{subequations}\label{compact_form2}
	\begin{align}
		&\dot x=f_\intra(x) + f_\inter(x,y)+ \frac{1}{\varepsilon} h_{\ctl}\Big(\BmU \big (\frac{t}{\varepsilon}\big),x\Big),\label{compact_form2:1}\\
		&\dot y=g(x,y)+\frac{1}{\varepsilon}h'_{\ctl}\Big(\BmU \big (\frac{t}{\varepsilon}\big),x\Big),
	\end{align} 
\end{subequations}
where $h_\ctr$ and $h'_\ctr$ no longer depend on $y$.

Since $f_\inter(0,y)=0$ holds for any $y$, then the term $f_\inter(x,y)$ in \eqref{compact_form2:1} can be viewed as a vanishing perturbation dependent of $y$ to the controlled nominal system
\begin{align}\label{nominal}
	\dot x=f_\intra(x)+\EP f_\ctr\Big( \BmU\big(\EPt\big),x\Big).
\end{align}
This perturbation can be decomposed as 
\begin{align*}
	f_\inter=[(f^{(1)}_{\inter})^\top,\dots,(f^{(r)}_{\inter})^\top]^\top
\end{align*}
where $f^{(k)}_{\inter}=- (\hat B^{(k)}_{\intra})^\top \BB_\inter \BmW_\inter \sin(R_2 x+R_3y)$ is the perturbation received by the $k$th cluster\footnote{The matrix $\BB$ is obtained by replacing the negative elements in the oriented incidence matrix $B$ with $0$. The columns of $\BB$ can be ordered such that $\BB=[\BB_\intra,\BB_\inter]$ and $\BB_\intra=[\BB_\intra\PC{1},\BB_\intra\PC{2},\dots,\BB_\intra\PC{r}]$. The diagonal matrix $\BmW=\blkdiag(\BmW_\intra,\BmW_\inter):=\diag([w_{ij}]_{(i,i)\in \CE})$ contains the weights, and the matrices $R_1$, $R_2$, and $R_3$ capture the relation between $\theta$ and $x,y$. More details can be found in Appendix~\ref{derivatin}.}.

Next, we show how a vibration control can stabilize $x=0$ in the presence of the perturbation $f_\inter(x,y)$ by introducing the change $\frac{1}{\varepsilon}f_\ctr(\BmU,x)$ to the system dynamics. 

To this end, we linearize the system at $x=0$ and obtain
\begin{align}\label{linearized}
	\scalemath{1}{\dot {\bar x} = \Big(J + \EP P\big(\EPt \big) \Big){\bar x}+ N(y){\bar x}},
\end{align}
where
\begin{equation}\label{expres:JandP}
	\begin{aligned}
		&J=\frac{\partial f_\intra}{\partial x}(0)= \blkdiag (J\PC{1},\dots,J\PC{r}),  \\
		& P(t)=\frac{\partial f_\ctr}{\partial x}(0,t)=  \blkdiag (P\PC{1}(t),\dots,P\PC{r}(t)),\\
		&N(y) = \frac{\partial f_\inter}{\partial x}(0,y).
	\end{aligned} 
\end{equation}
Notice that both $J$ and $P(t)$ are block-diagonal, because the dynamics described by $f_\intra$ and the vibrational control $f_\ctr$ have no inter-cluster couplings. Here,  for each $k$,  ${J\PC{k}=-(\hat B^{(k)}_{\intra})^\top \BB_\intra^{(k)} \BmW^{(k)}_\intra R_1}$ and $P\PC{k}(t)=-	(\hat B_\intra^{(k)})^\top \BB^{(k)}_\intra \BmU^{(k)}(t) R_1$. One can also derive that
\begin{align*}
	N(y) = -(\hat B_\intra)^\top \BB_\inter \BmW_\inter (\BONE_{n-r}^\top \otimes\sin(R_3y))\odot R_2.
\end{align*}
Observe that $P(t)$ is almost periodic with a zero mean value just as $\BmU(t)$. Different from our previous work \cite{YQ-DSB-FP:22a}, here we also linearize the perturbation $f_\inter(x,y)$ at $x=0$ in \eqref{linearized}. 

\begin{lemma}[\textit{Connecting the stability of Systems} \eqref{compact_form2} \textit{and} \eqref{linearized}] \label{lemma:linearized}
	If the equilibrium $\bar x=0$ of the system \eqref{linearized} is exponentially stable uniformly in $y$, then $x=0$ of the system  \eqref{compact_form2} is also exponentially stable uniformly in $y$. 
\end{lemma}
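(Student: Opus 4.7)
The plan is to treat this lemma as a partial-stability analog of Lyapunov's indirect method: system \eqref{linearized} is the linearization of system \eqref{compact_form2} at the partial equilibrium $x=0$, and the remainder vanishes quadratically in $x$ uniformly in $y$, so the local exponential decay of $\bar x$ should transfer to local exponential decay of $x$.

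\textbf{Step 1 (quadratic remainder, uniform in $y$).} I would rewrite \eqref{compact_form2:1} as
\begin{equation*}
\dot x = A(t,y)\,x + g(t,x,y), \qquad A(t,y) := J + \EP P\!\left(\EPt\right) + N(y),
\end{equation*}
where $g$ collects the three Taylor remainders $f_\intra(x)-Jx$, $f_\inter(x,y)-N(y)x$, and $\EP[h_\ctl(\BmU(\EPt),x)-P(\EPt)x]$. Each term arises from analytic trigonometric functions that vanish quadratically at $x=0$, and the $y$-dependence enters only through bounded sines and cosines of $R_3 y$. Consequently there exist constants $L,\rho>0$ (depending on the fixed $\varepsilon$, but not on $t$ or $y$) with $\|g(t,x,y)\|\le L\|x\|^2$ whenever $\|x\|\le\rho$.

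\textbf{Step 2 (uniform transition-matrix bound).} The hypothesis that $\bar x=0$ of \eqref{linearized} is exponentially $\bar x$-stable uniformly in $y$ is equivalent to saying that the state-transition matrix $\Phi(t,s;y(\cdot))$ of $\dot{\bar x}=A(t,y(t))\,\bar x$ satisfies $\|\Phi(t,s;y(\cdot))\|\le M e^{-\lambda(t-s)}$ for all $t\ge s\ge 0$, with constants $M,\lambda>0$ that do not depend on the exogenous signal $y(\cdot)$.

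\textbf{Step 3 (variation of parameters and a Gronwall bootstrap).} Fix any solution $(x(\cdot),y(\cdot))$ of \eqref{compact_form2} with $\|x(0)\|<\delta$, and feed its own $y$-trajectory into the bound of Step 2. Variation of parameters gives
\begin{equation*}
\|x(t)\|\le M\|x(0)\|\,e^{-\lambda t}+\int_0^t M L\,e^{-\lambda(t-s)}\|x(s)\|^2\,ds,
\end{equation*}
valid so long as the trajectory stays inside the $\rho$-ball. A standard bootstrap — assume $\|x(s)\|\le 2M\|x(0)\|e^{-\lambda s/3}$ on $[0,t]$, substitute back, and close the inequality for $\|x(0)\|$ below a threshold depending only on $M,L,\lambda$ — yields $\|x(t)\|\le 2M\|x(0)\|e^{-\lambda t/3}$ for all $t\ge 0$. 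Both the smallness threshold and the decay rate are independent of $y(0)$, which is exactly Definition~\ref{defi:stability}.

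\textbf{Main obstacle.} The delicate point is keeping every constant uniform in $y$. The $y$-dependence enters both through the linear coefficient $N(y)$ and through the Taylor remainder of $f_\inter(x,y)$, so I must verify that neither the quadratic bound of Step 1 nor the transition-matrix bound of Step 2 degrades as $y$ varies. Both reduce to the fact that $y$ appears only through $\sin(R_3 y)$ and $\cos(R_3 y)$, so all relevant coefficient matrices are globally bounded in $y$; once this is noted, the argument is a partial-stability rephrasing of the textbook indirect-method template, with the only $\varepsilon$-sensitivity hidden inside the constants $L$ and $\delta$.
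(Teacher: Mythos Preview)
Your proposal is correct and reaches the same conclusion as the paper, but via a genuinely different mechanism. The paper invokes a \emph{converse Lyapunov theorem} for partial stability (Haddad--Chellaboina, Th.~4.4): from uniform exponential $\bar x$-stability of \eqref{linearized} it extracts a function $V(t,\bar x,y)$ with $\dot V\le -c_1\|\bar x\|^2$ along the linear part and $\|\partial V/\partial \bar x\|\le c_2\|\bar x\|$, then bounds the quadratic remainder $\Delta$ exactly as in your Step~1 and absorbs it into the derivative for $\|x\|$ small. You instead encode the hypothesis as a uniform transition-matrix bound and close a variation-of-parameters/Gronwall bootstrap. Both routes rest on the same two facts---the remainder is $O(\|x\|^2)$ uniformly in $(t,y)$, and all $y$-dependence enters through bounded trigonometric terms---so the arguments are equivalent in strength.

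One point worth tightening in your write-up: the ``equivalence'' you assert in Step~2 is the only place the two proofs really diverge in rigor. The hypothesis, read through Definition~\ref{defi:stability}, gives decay for the $y$-trajectories generated by the \emph{coupled} dynamics \eqref{compact_form2}, whereas in Step~3 you need the transition-matrix bound along the $y$-trajectory of the \emph{nonlinear} system. The converse-Lyapunov route sidesteps this because $V$ is a function of the state $(t,\bar x,y)$ rather than of a particular trajectory, so the inequality transfers pointwise. Your route requires the extra observation (which you do make in the ``main obstacle'' paragraph) that $N(y)$ is globally bounded, so the linear part $\dot{\bar x}=A(t,y)\bar x$ is uniformly exponentially stable over \emph{all} bounded signals $y(\cdot)$, not just those produced by one particular $y$-dynamics. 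Stating that explicitly would remove any ambiguity; otherwise the argument is complete.
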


From this lemma, a vibrational control resulting in $P(t)$ that stabilizes $x=0$ of the system \eqref{linearized} stabilizes the cluster synchronization manifold $\CM$, too. Next, we aim to provide conditions on which a vibrational control stabilizes the system \eqref{linearized}. 

Let $s={t}/{\varepsilon}$, then we rewrite the system \eqref{linearized} as
\begin{align}\label{change:time:scale}
	\frac{d \bar x}{d s} = (\varepsilon J +  P(s))\bar x + \varepsilon N(y)\bar x.
\end{align}

Next, we use averaging methods to analyze this system. However, the standard first-order averaging\footnote{Given a system $\dot x = \varepsilon f(t,x) + \varepsilon^2 h(t,x)+\dots+\varepsilon^k g(t,x)$, the first averaging method calculates the averaged system $\dot x = \varepsilon \bar f(t,x)$ only using the first-order term $\varepsilon f(t,x)$ and ignoring the higher-order terms $\mathcal O(\varepsilon^2)$, i.e., $\bar f(x) =\lim_{T\to \infty}\frac{1}{T}\int_{t=0}^{T} f(t,x)dt$ \cite{SJA-FMJ:07}.} is not applicable here. Recall that $P(s)$ has zero mean. Then, applying the first-order averaging to \eqref{change:time:scale} just eliminates the $P(s)$ term and results in the uncontrolled system $\frac{d \bar x}{d s} = \varepsilon J \bar x+\varepsilon N( y) \bar x$. 

To avoid that, we change the coordinates of  \eqref{change:time:scale} first before using the averaging method. To do that, we introduce an auxiliary system  
\begin{align}\label{periodic}
	{\frac{d \hat x}{d s}=P(s) \hat x},
\end{align}
and let $\Phi(s,s_0)$ be its state transition matrix. Since $P$ is block-diagonal, it holds that
$
\Phi=\blkdiag(\Phi\PC{1},\dots,\Phi\PC{r}),
$
where $\Phi\PC{k}$ is the transition matrix of the subsystem in the $k$th cluster ${d \hat x_k}/{d s}=P\PC{k}(s) \hat x_k$. 

Consider the change of coordinates $z(s)=\Phi^{-1}(s,s_0) \bar x(s)$. It follows from the system \eqref{change:time:scale} that
\begin{align}\label{coordinated}
	\frac{dz}{ds} = \varepsilon \Phi^{-1} J \Phi z+ \varepsilon \Phi^{-1} N(y) \Phi z.
\end{align}
Since $P(s)$ is almost periodic in $s$ and mean-zero, so are $\Phi(s,s_0)$ and $\Phi^{-1}(s,s_0)$. Denote
\begin{equation*}
	G(y) := \Phi^{-1} N(y) \Phi.
\end{equation*}
For each $k=1,\dots, r$, let $G^{(k)}(y)\in \R^{(n_i-1)\times n}$ be the dynamics associated with the $k$th cluster.
Now, we associate \eqref{coordinated} with a partially averaged system
\begin{align}\label{average}
	{\frac{dz}{ds} = \varepsilon \big(\bar J + G(y) \big)z},
\end{align}
where
\begin{align}\label{J-bar}
	{\bar J= \lim\limits_{T\to \infty}\frac{1}{T}\int_{s_0}^{s_0+T} \Phi^{-1} (s,s_0) J \Phi(s,s_0)ds}.
\end{align}

As both $J$ and $\Phi$ are block-diagonal, one can derive that $\bar J$ is also block-diagonal satisfying $\bar J=\blkdiag(\bar J\PC{1},\dots,\bar J\PC{r})$ with 
\begin{align}\label{bar_J:k}
	\bar J \PC{k}= \lim\limits_{T\to \infty} \frac{1}{T}\int_{s_0}^{s_0+T} \Big(\Phi\PC{k}(s,s_0)\Big)^{-1} J\PC{k} \Phi\PC{k}(s,s_0)ds.
\end{align} 
Therefore, it holds that 
\begin{equation*}
	\frac{d z^{(k)}}{ds} = \varepsilon \big(\bar J\PC{k} z_k+ G\PC{k}(y) z \big),
\end{equation*}
Then, it can be shown that for any $k$, there exist $\bar \gamma_{k\ell}>0, \ell =1 ,\dots, r$, such that 
\begin{align*}
	\left\|G\PC{k}(y) z  \right\|\le \sum_{\ell=1}^{r}\bar \gamma_{k\ell} \|z_\ell\|
\end{align*}
for each $k$ (see Lemma~\ref{lemma:bound:pert} in Appendix~\ref{proof:general} for more details). 

\begin{theorem}[{Sufficient condition for vibrational stabilization}]\label{theorem:general}
	Assume that $\bar J=\blkdiag(\bar J\PC{1},\dots,\bar J\PC{r})$ in Eq.~\eqref{J-bar} is Hurwitz. Let $\bar X_k$ be the solution to the Lyapunov equation 
	\begin{align}\label{Ly:controlled}
		{(\bar J\PC{k})^\top \bar X_k+\bar X_k \bar J\PC{k}=-I}.
	\end{align}
	Define the matrix $S=[s_{k \ell}]_{r\times r} $ with
	\begin{align}\label{matrix:S}
		s_{k\ell}=\Big\{\begin{matrix*}[l]
			\lambda_{\max}^{-1}(\bar X_k)-\bar\gamma_{k k} , &\text{ if }k=\ell,\\
			-\bar \gamma_{k \ell}, &\text{ if }k \neq\ell,
		\end{matrix*}
	\end{align}
	where $\lambda_{\max}(\cdot)$ denotes the maximum eigenvalue of a matrix. 
	If $S$ is an $M$-matrix\footnote{A real non-singular matrix $A=[a_{ij}]$ is an $M$-matrix if $a_{ij}\le 0$ for any $i\neq j$, and all its leading principal minors are positive \cite[ch. 2.5]{RAH-CRJ:94}.}, then  there exists $\varepsilon^*>0$ such that, for any $\varepsilon<\varepsilon^*$:
	
	(i) the equilibrium $x=0$ of the system \eqref{compact_form2} is exponentially stable uniformly in $y$; 
	
	(ii) the cluster synchronization manifold $\CM$ of the system \eqref{Kuramoto:controlled} is exponentially stable. 
\end{theorem}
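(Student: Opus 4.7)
The plan is to prove part~(i); part~(ii) then follows immediately from the remark just before the theorem, which identifies exponential stability of the manifold $\CM$ along \eqref{Kuramoto:controlled} with partial exponential stability of $x=0$ for \eqref{compact_form2}. By Lemma~\ref{lemma:linearized}, it suffices to establish uniform-in-$y$ exponential stability of $\bar x=0$ for the linearized system \eqref{linearized}. After the fast-time rescaling $s=t/\varepsilon$ and the Floquet-type coordinate change $z(s)=\Phi^{-1}(s,s_0)\bar x(s)$, this reduces to the same property for $z=0$ in \eqref{coordinated}, because $P(s)$ is mean-zero almost periodic and block-diagonal, which makes $\Phi$ and $\Phi^{-1}$ bounded almost-periodic matrix functions; norm bounds on $z$ and $\bar x$ therefore differ only by bounded multiplicative constants.

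Next I would apply an almost-periodic averaging theorem to \eqref{coordinated}. Only the coefficient $\Phi^{-1} J \Phi$ depends on $s$, and its time average equals $\bar J$ by definition \eqref{J-bar}; the remaining term $G(y)$ is $s$-independent and uniformly bounded in $y$, since $N(y)$ is sinusoidal with fixed matrix prefactors. Classical averaging results (e.g., Sanders--Verhulst) then provide an $\varepsilon^*>0$ such that, for every $\varepsilon\in(0,\varepsilon^*)$, uniform-in-$y$ exponential stability of the averaged system \eqref{average} implies the same for \eqref{coordinated}, with constants independent of $y$.

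It therefore remains to prove uniform-in-$y$ exponential stability of the averaged system \eqref{average}. Since $\bar J$ is Hurwitz and block-diagonal, each Lyapunov equation \eqref{Ly:controlled} has a unique symmetric positive-definite solution $\bar X_k$. Define $V_k(z_k) = z_k^\top \bar X_k z_k$ and differentiate along \eqref{average} using the Lyapunov equation:
\begin{equation*}
\dot V_k = -\varepsilon\|z_k\|^2 + 2\varepsilon z_k^\top \bar X_k G^{(k)}(y) z.
\end{equation*}
Combining this identity with the bound $\|G^{(k)}(y) z\| \le \sum_{\ell=1}^r \bar\gamma_{k\ell} \|z_\ell\|$ and with $\|z_k\|^2 \ge \lambda_{\max}^{-1}(\bar X_k)\, V_k$ yields, after a positive diagonal rescaling, a componentwise differential inequality for $v(s)=(\|z_1\|,\ldots,\|z_r\|)^\top$ of the form $\dot v \le -\varepsilon S v$ entrywise, with $S$ defined in \eqref{matrix:S}. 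Because $S$ is an M-matrix, there exists $\xi\in\R^r$ with all entries strictly positive such that $S^\top\xi$ has all entries strictly positive. Hence the scalar Lyapunov function $W(s)=\xi^\top v(s)$ satisfies $\dot W \le -\varepsilon\eta W$ for some $\eta>0$ determined only by $\xi$ and $S$, and not by $y$. Uniform exponential decay of $v$, and thus of $\bar x$ and of $x$ for \eqref{compact_form2}, then follows, closing the argument via Lemma~\ref{lemma:linearized}.

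The main obstacle I anticipate is propagating uniformity in $y$ through the averaging step. Because $P(s)$ is only almost periodic, one must invoke the almost-periodic version of the averaging theorem and verify that the $O(\varepsilon)$ averaging residual does not erode the stability margin $\eta$ certified by the M-matrix weight $\xi$. Uniform boundedness of $G(y)$, inherited from the sinusoidal form of $N(y)$, keeps this residual controlled by a constant multiple of $\|z\|$ uniformly in $y$, so that a single $\varepsilon^*$ works for every $y$. The M-matrix-to-Lyapunov step is by contrast standard (a direct consequence of the positive-left-eigenvector characterization of M-matrices), so the technical weight of the proof concentrates in the averaging-uniformity bookkeeping.
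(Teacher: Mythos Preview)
Your overall architecture matches the paper exactly: reduce to \eqref{linearized} via Lemma~\ref{lemma:linearized}, pass to the fast time scale and the $z$-coordinates, average to obtain \eqref{average}, show exponential stability of the averaged system from the M-matrix hypothesis, and then transfer back via an averaging theorem and the boundedness of $\Phi,\Phi^{-1}$. Your handling of the averaging step and the uniformity in $y$ is essentially what the paper does (it cites \cite[Th.~10.4]{HKK:02-bis}).

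The gap is in your Lyapunov/M-matrix step. From
\[
\dot V_k \;=\; -\varepsilon\|z_k\|^2 + 2\varepsilon\, z_k^\top \bar X_k\, G^{(k)}(y)z
\;\le\; -\varepsilon\|z_k\|^2 + 2\varepsilon\,\lambda_{\max}(\bar X_k)\,\|z_k\|\sum_{\ell}\bar\gamma_{k\ell}\|z_\ell\|
\]
you do \emph{not} obtain a componentwise linear inequality $\dot v \le -\varepsilon S v$ for $v_k=\|z_k\|$, no matter what positive diagonal rescaling you apply: the right-hand side is intrinsically quadratic in the block norms, and there is no clean passage from $\dot V_k$ to $\dot{\|z_k\|}$ that produces the specific diagonal entry $\lambda_{\max}^{-1}(\bar X_k)-\bar\gamma_{kk}$ of $S$. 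The scalar function $W=\xi^\top v$ you propose therefore lacks the claimed dissipation inequality.

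The paper avoids this by using the standard composite-Lyapunov argument for interconnected systems \cite[Th.~9.2]{HKK:02-bis}: take $V=\sum_k d_k V_k$ and bound $\dot V$ by a \emph{quadratic form} in $v=(\|z_1\|,\ldots,\|z_r\|)^\top$, namely $\dot V \le -\tfrac12\, v^\top(D\hat S+\hat S^\top D)v$, where $\hat S = \diag(\lambda_{\max}(\bar X_1),\ldots,\lambda_{\max}(\bar X_r))\,S$. Since $S$ (hence $\hat S$) is an M-matrix, there exists a positive diagonal $D$ making $D\hat S+\hat S^\top D$ positive definite, which yields exponential stability of \eqref{average} uniformly in $y$. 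Replace your linear-in-$v$ step with this quadratic-form argument and your proof goes through.
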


Note that a similar theorem was presented in our preliminary work \cite{YQ-DSB-FP:22a}. Here, our results are built on a complete instead of a partial linearization technique in \eqref{linearized}. Theorem~\ref{theorem:general} provides a sufficient condition for vibrational control inputs to stabilize the cluster synchronization. To design an effective vibrational control law that stabilizes $\CM$, one can design vibrations to satisfy the following three conditions: 
i) $\bar J$ in \eqref{J-bar} is Hurwitz, 
ii) $S$ defined in \eqref{matrix:S} is an $M$-matrix, and
iii) the frequency of the vibrations is sufficiently high.

\noindent
\textbf{Connections with robustness of linear systems:} Consider a stable linear system 
\begin{equation*}
	\dot x =Ax.
\end{equation*}
Some earlier works (e.g., \cite{PRV-TM:80,YR:85}) utilize 
\begin{equation}\label{measure:robust}
	\CR(A):=\lambda^{-1}_{\max}(X)
\end{equation}
to measure its robustness ($\lambda_{\max}(\cdot)$ stands for the maximum eigenvalue of a matrix),  where $X$ is the solution to the Lyapunov equation:
\begin{equation*}
	A^\top X +XA =-I. 
\end{equation*}
A larger $\CR(A)$ means that the system is more robust.

In our case, from \eqref{linearized}, the uncontrolled intra-dynamics around $\CM$ are described by the linearized system
\begin{align}\label{uncontrolled}
	&\dot x_k=J\PC{k} x_k+f\PC{k}_\inter(x,y),&k=1,\dots,r,
\end{align}
where $J\PC{k}$ is stable and $f\PC{k}_\inter(x,y)$ is taken as the vanishing perturbation. Here, $x_k=0$ means synchronization of the oscillators in the $k$th cluster. Similarly, one can interpret that $\CR(J\PC{k})$ measures the robustness of synchronization in the $k$th cluster.  If the intra-cluster synchronization is sufficiently robust (i.e., $\CR(J\PC{k})$'s are sufficiently large) to dominate the perturbations resulted from inter-cluster connections, the cluster synchronization is stable. A sufficient condition is constructed in \cite[Th. 3.2]{TM-GB-DSB-FP:18}. By contrast, if $\CR(J\PC{k})$'s are not large enough, the cluster synchronization can lose its stability. Yet, the robustness of the intra-cluster synchronization can be reshaped by introducing vibrations to the local network connections. The new robustness is instead measured by $\CR(\bar J\PC{k})$ defined in \eqref{bar_J:k}.   

Now, the question naturally arises: how to design vibrational control such that the robustness in the cluster can be improved? We aim to provide answers in the next section.

\section{Improving Robustness by Vibrational Control}\label{sec:robustness}

The primary objective of this section is to demonstrate the design of vibrational control with the intention of enhancing the robustness of synchronization within each cluster. As observed in the concluding part of the previous section, the robustness is intimately linked to the linearized systems of both the uncontrolled system and the averaged controlled system. Hence, we commence by examining the linear system and subsequently explore the applicability of the findings from linear systems to Kuramoto-oscillator networks.

\subsection{Linear Systems}\label{subsec:linear}
Let us consider a linear system 
\begin{equation}\label{linear}
	\dot x =A x,
\end{equation}
where $x\in\R^n, A\in\R^{n\times n}$, and $A$ is assumed to be Hurwitz. Consider a vibrational control matrix $U(t)$ that influences the system parameters in $A$, resulting in the following controlled system
\begin{equation}\label{controlled_net_compact}
	\dot x = \big(A+\frac{1}{\varepsilon}U\big(\frac{t}{\varepsilon}\big) \big) x,
\end{equation} 
where  $U(t)=[u_{ij}(t)]_{n\times n}$ with $u_{ij}(t)=u_{ij}\sin(\beta_{ij} t)$, and $\varepsilon>0$ is small, determining the frequencies of the vibrations.


Vibrational control can improve the robustness of a stable system. To show this, we follow similar step as in Section~\ref{results:general} to associate \eqref{controlled_net_compact} with the averaged system
\begin{equation}\label{linear:averaged}
	\dot{\bar x} = \bar A \bar x, 
\end{equation}
where the time scale has been restored to $t=\varepsilon s$, 
\begin{equation*}
	\bar A =\lim_{T\to \infty}\frac{1}{T} \int_{t=0}^{T} \Phi^{-1}(t,t_0) A \Phi(t,t_0) d t,
\end{equation*}
and $\Phi(t,t_0)$ is the state transition matrix of the system
\begin{equation}\label{auxilary}
	\dot {\hat x} = U(t) \hat x. 
\end{equation}

When $\bar A$ is Hurwitz, the controlled system \eqref{controlled_net_compact} behaves like \eqref{linear:averaged} in average if $\varepsilon$ is small. Then, one can interpret that vibrational control changes the system matrix from $A$ to $\bar A$ in an in-average sense. Vibrational inputs can be designed to carefully modify the elements in $A$ so that $\CR(\bar A)$ is larger than $\CR(A)$ to improve robustness. 

However, which elements in $A$ and how they can be changed is a challenging problem. An earlier attempt has been made in \cite{SMM:80}. Here, we aim to generalize their result by utilizing graph-theoretical approaches. 

Specifically, we associate the uncontrolled system $\dot x =Ax$ with a weighted directed network $\CG_A=(\CV,\CE_A,A)$. Here, $\CV=\{1,2,\dots\}$, and there is a directed edge from $i$ to $j$, i.e., $(i,j)\in\CE_A$ if $a_{ji}\neq 0$ and $i\neq j$ (no self-loops are considered). The matrix $A$  becomes the weighted adjacency matrix. Likewise, one can associate the averaged controlled system \eqref{linear:averaged} with a weighted directed network $\CG_{\bar A}=(\CV,\CE_{\bar A},{\bar A})$, which we refer to as the \textit{functioning network}\footnote{We highlight that the terminology we employ here is different from \textit{functional network} that is widely-used in neuroscience (e.g., see \cite{DBS-MLA-ASD-TBE:06}). To avoid possible confusion,  we clarify that, when we say that vibrational control functionally changes the network $\CG_A$, we mean that it leads to a functioning network $\CG_{\bar A} $ that is different from $\CG_A$.}. Then, changing elements in $A$ reduces to altering the connection weights in $\CG_A$ via vibrational control. 

\begin{definition}
	The edge $(i,j)\in \CE_A$ is said to be vibrationally increasable if there exists a vibrational control $U(t)$ such that the weight of $(i,j)$ is increased in $\CE_{\bar A}$, i.e.,  $\bar a_{ji}>a_{ji}$. It is said to be vibrationally decreasable if there exists a vibrational control $U(t)$ such that the weight of $(i,j)$ is decreased in $\CE_{\bar A}$, i.e.,  $\bar a_{ji}<a_{ji}$.
\end{definition}

\begin{lemma}\label{lemma:inc_dec}
	Consider an edge $(i,j)\in \CE_A$. It is vibrationally increasable if there is an edge in the reverse direction that has a negative weight, i.e., $a_{ij}<0$. It is vibrationally decreasable if there is an edge in the reverse direction that has a positive weight, i.e., $a_{ij}>0$. 
\end{lemma}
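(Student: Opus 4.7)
The plan is to exhibit a one-parameter family of vibrational controls that changes only the $(j,i)$-entry of the averaged matrix $\bar A$, and to read off the sign of that change from the hypothesis on $a_{ij}$. Take
\begin{equation*}
    U(t) = u\sin(\beta t)\, E_{ji}, \qquad E_{ji} := e_j e_i^\top,
\end{equation*}
with $u\in\R$ and $\beta>0$ to be chosen. Because $i\neq j$, $E_{ji}^2 = 0$, so $E_{ji}$ is nilpotent of order two and the state transition matrix of $\dot{\hat x} = U(s)\hat x$ reduces to the affine form $\hat\Phi(s,s_0) = I + \psi(s)\,E_{ji}$, $\hat\Phi^{-1}(s,s_0) = I - \psi(s)\,E_{ji}$, where $\psi(s) = (u/\beta)(\cos(\beta s_0) - \cos(\beta s))$. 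Pick the averaging origin $s_0 = \pi/(2\beta)$ so that $\psi(s) = -(u/\beta)\cos(\beta s)$, which has zero time average and second moment $u^2/(2\beta^2)$.

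Substituting into $\hat\Phi^{-1}(s) A \hat\Phi(s)$, using $E_{ji}^2 = 0$ together with the rank-one identity $E_{ji} A E_{ji} = e_j(e_i^\top A e_j)e_i^\top = a_{ij}\,E_{ji}$, one obtains
\begin{equation*}
    \hat\Phi^{-1}(s)\, A\, \hat\Phi(s) = A + \psi(s)\,[A,E_{ji}] - \psi(s)^2 a_{ij}\,E_{ji}.
\end{equation*}
Averaging over $s$ kills the linear-in-$\psi$ commutator term by the choice of $s_0$ and leaves
\begin{equation*}
    \bar A - A \;=\; -\frac{u^2 a_{ij}}{2\beta^2}\, E_{ji},
\end{equation*}
so $\bar A$ agrees with $A$ outside entry $(j,i)$ and $\bar a_{ji} - a_{ji} = -u^2 a_{ij}/(2\beta^2)$, which has the sign of $-a_{ij}$. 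Thus $a_{ij}<0$ forces $\bar a_{ji} > a_{ji}$ (the edge $(i,j)$ is vibrationally increasable) and $a_{ij}>0$ forces $\bar a_{ji} < a_{ji}$ (the edge is vibrationally decreasable), as claimed.

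The main obstacle I anticipate is the first-order commutator $\psi(s)[A,E_{ji}]$, whose $(j,i)$-component $\psi(s)(a_{jj}-a_{ii})$, together with its nonzero entries along row $j$ and column $i$, would contaminate both $\bar a_{ji}$ and other entries of $\bar A$ if it failed to average to zero. The phase choice $\cos(\beta s_0)=0$ handles this, and is admissible because the averaging formula for $\bar A$ in Section~\ref{subsec:linear} carries a free initial time $s_0$ (equivalently, one uses the cosine phase of the vibration). Once that first-order term is averaged away, the proof reduces to the single rank-one identity for $E_{ji} A E_{ji}$ and the elementary sign computation above.
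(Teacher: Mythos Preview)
Your proof is correct and follows essentially the same strategy as the paper: inject a single vibration on the $(j,i)$ entry, compute the averaged matrix explicitly, and read off the sign of $\bar a_{ji}-a_{ji}$ from $-a_{ij}$. The only difference is presentational---you carry out the computation directly via the nilpotency of $E_{ji}$ and explicitly dispatch the first-order commutator term by the phase choice $\cos(\beta s_0)=0$, whereas the paper relabels so that $(i,j)=(1,n)$, places the control in quasi-lower-triangular form, and then invokes the formula from \cite{SMM:80} (which implicitly absorbs the same phase/mean-zero issue); your version is more self-contained.
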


 \begin{figure}[t]
	\centering
	\includegraphics[scale=1]{./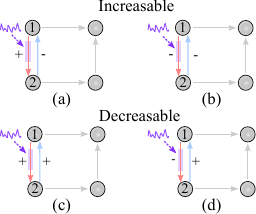}
	\caption{Illustration of vibrationally increasable and decreasable edges. (a) and (b): The red edge (i.e., $(1,2)$) is vibraionally increasable since the edge in the reverse direction has negative weight. (c) and (d): It is vibrational decreasable since the reverse edge has positive weight. Injecting a vibration to the red edge itself can functionally increase or decrease its weight in the corresponding cases.}
	\label{illus:decr_incr}
\end{figure}

An illustration of vibrationally increasable and decreasable edges can be found in Fig.~\ref{illus:decr_incr}. When an edge satisfies the corresponding conditions, we further find that directly injecting a vibration to it can functionally increase or decrease its weight (see Appendix~\ref{pf:inc_dec} for more details).

The conditions we identified here are just sufficient ones. Edges that do not satisfy these conditions may  also be vibrationally increasable or decreasable. 
Yet, we restrict our attention to the edges that satisfy the conditions in Lemma~\ref{lemma:inc_dec}. Then, based on them, we define two sets $\CE_{\rm inc}:=\{(i,j)\in \CE_A: a_{ij}<0\}$, and $\CE_{\rm dec}:=\{(i,j)\in \CE_A: a_{ij}>0\}$, which are vibrationally increasable and decreasable edges, respectively. Subsequently, we define a directed and signed graph $\CG^{\rm mod}_A:=(\CV,\CE_{\rm mod},S)$, where $\CE_{\rm mod}=\CE_{\rm inc} \cup  \CE_{\rm dec}$, and $S=[s_{ij}]$ with $s_{ij}=-\sign(a_{ij})$ if $(i,j)\in \CE_A$ and $s_{ij}=0$ otherwise. We refer to $\CG^{\rm mod}_A$ as the \textit{modifiable graph} of $\CG_A$. An example is shown in Fig.~\ref{theo:linear} (b). Given a graph $\CG_\Delta$, we denote $\CG_\Delta\subseteq \CG^{\rm mod}_A$ if the signed (but unweighted) edges of $\CG_{\Delta}$ constitute a subset of $\CG^{\rm mod}_A$'s. 

\begin{theorem}\label{vibrational:design}
	Consider a matrix $\Delta=[d_{ij}]\in\R^{n\times n}$, and let $\CG_\Delta:=(\CV,\CE_\Delta,\Delta)$ be the directed and signed graph associated with it. If $\CG_\Delta$ is a directed acyclic graph (i.e., a graph with no cycles) and $\CG_\Delta \subseteq \CG_A^{\rm mod}$, there exist vibrational control inputs such that the system matrix $\bar A$ of \eqref{linear:averaged} becomes $\bar A=A+\Delta$. Further, if  $A+\Delta$ is Hurwitz., there exist $\varepsilon_0$ such that, for any $\varepsilon<\varepsilon_0$, the system \eqref{controlled_net_compact} is exponentially stable.

\end{theorem}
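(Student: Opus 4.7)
The plan is to construct $U(t)$ edge by edge, show that frequency incommensurability together with acyclicity decouples the contributions so that $\bar A = A + \Delta$, and finally invoke classical averaging for exponential stability.

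\textbf{Single-edge design.} By the DAG hypothesis I can topologically order the vertices so that every edge $(i,j)\in\CE_\Delta$ satisfies $i<j$; the submatrix of $U$ on $\CE_\Delta$ is then strictly lower triangular. To each $(i,j)\in\CE_\Delta$ I would assign a single-tone vibration $u_{ji}(t)=u_{ji}\sin(\beta_{ij}t)$ with amplitude $u_{ji}$ and frequency $\beta_{ij}$ to be chosen. Applied in isolation, the calculation underlying Lemma~\ref{lemma:inc_dec} (see Appendix~\ref{pf:inc_dec}) yields
\begin{equation*}
\overline{\Phi^{-1}A\Phi}=A-\overline{\phi_{ij}^2}\,a_{ij}\,E_{ji},
\end{equation*}
where $E_{ji}$ is the elementary matrix with a single $1$ at position $(j,i)$ and $\phi_{ij}$ denotes the zero-mean antiderivative of $u_{ji}(\cdot)$. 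Since $\CG_\Delta\subseteq\CG_A^{\rm mod}$ forces $\sign(d_{ji})=-\sign(a_{ij})$, the equation $\overline{\phi_{ij}^2}\,(-a_{ij})=d_{ji}$ admits a real amplitude $u_{ji}$ that realizes the desired single-entry shift on exactly the target edge.

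\textbf{Combining vibrations.} I would choose the set $\{\beta_{ij}:(i,j)\in\CE_\Delta\}$ to be rationally independent. The state-transition matrix of $\dot{\hat x}=U(t)\hat x$ then admits the Peano-Baker expansion $\Phi(s)=I+\int U+\int\!\int UU+\cdots$, which terminates at finite depth because $U$ is strictly triangular and hence nilpotent. Each $k$-fold iterated product $U(s_1)\cdots U(s_k)$ decomposes into a sum over length-$k$ directed walks in $\CG_\Delta$; since $\CG_\Delta$ is acyclic, every such walk visits each vertex at most once, so each surviving term is a product of vibrations on pairwise distinct edges. Substituting the expansion into $\overline{\Phi^{-1}A\Phi}$ and using rational independence of the $\beta_{ij}$'s, every time average of a product of sines/cosines at two or more different frequencies vanishes. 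Only the single-edge self-contributions from the previous paragraph survive, yielding $\bar A = A+\sum_{(i,j)\in\CE_\Delta} d_{ji} E_{ji} = A+\Delta$.

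\textbf{Exponential stability and main obstacle.} Once $\bar A=A+\Delta$ is Hurwitz, I would apply the classical first-order averaging theorem to $dz/ds=\varepsilon\,\Phi^{-1}(s)A\,\Phi(s)\,z$, obtained from \eqref{controlled_net_compact} via the time-scaling $s=t/\varepsilon$ and the change of coordinates $z=\Phi^{-1}x$. Since $\bar A$ is Hurwitz, the averaged system $d\bar z/ds=\varepsilon\bar A\bar z$ is exponentially stable, and boundedness of $\Phi,\Phi^{-1}$ transfers exponential stability back to $x$ in \eqref{controlled_net_compact} for all $\varepsilon<\varepsilon_0$ sufficiently small. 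The delicate step is the combinatorial argument in the previous paragraph: one must verify that \emph{no} iterated-integral contribution---including those arising from walks that close off through an entry of $A$---produces a non-vanishing time average at an unintended position of $\bar A$. An induction on walk length, exploiting the fact that each length-$k$ walk in the DAG carries $k$ pairwise distinct incommensurate frequencies so that no nontrivial $\BZ$-linear combination of them vanishes, should handle this rigorously.
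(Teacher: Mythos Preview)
Your decoupling claim in the ``Combining vibrations'' step fails. The assertion that ``only single-edge self-contributions survive'' overlooks that in $\overline{\Phi^{-1}A\Phi}$ each summand is a product of an entry of $\Phi^{-1}$ and an entry of $\Phi$, i.e.\ a product over \emph{two} walks in $\CG_\Delta$, not one. Acyclicity guarantees that a single walk uses each edge at most once, but two different walks can share edges, so the same frequency appears twice in the product and the time average is nonzero. Concretely, suppose $\CG_\Delta$ contains the path $j\to m\to i$ but not the edge $j\to i$ (so you set $u_{ij}=0$). Then $\Phi_{ij}$ still contains the iterated integral $\int u_{im}\phi_{mj}$, with frequency content $\beta_{im}\pm\beta_{mj}$; consequently $(\Phi^{-1})_{ij}\Phi_{ij}$ and $\phi_{ij}^2$ have nonzero mean, and $\bar a_{ij}-a_{ij}$ picks up a term proportional to $a_{ji}$ that you did not plan for. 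If $a_{ji}\neq 0$ this produces an unintended modification at position $(i,j)$, so $\bar A\neq A+\Delta$. Your proposed induction (``$k$ pairwise distinct frequencies, no nontrivial $\BZ$-combination vanishes'') applies to a single walk and does not rule out this two-walk resonance.

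The paper does not claim decoupling. After the same topological ordering, it invokes the closed-form expression $\bar A'=A'+\bar B$ with $\bar b_{ij}=-a'_{ji}\,\overline{(\phi'_{ij})^2}$ from \cite{SMM:80}, and then \emph{recursively} computes $\phi'_{ij}$ via \eqref{phi:2}, which explicitly propagates the multi-edge walk contributions $u'_{ik}\phi'_{kj}$. The amplitude $u'_{ij}$ is chosen last, after all shorter-walk contributions to $\phi'_{ij}$ are already fixed, so the design absorbs rather than ignores the interactions. Your approach would be salvageable only for $\Delta$'s whose graph has no directed path of length $\geq 2$ (e.g.\ a matching or a star), which is strictly weaker than the theorem's DAG hypothesis; for general DAGs you need the recursive design, not an independence argument.
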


This theorem provides a method to functionally change the system matrix from $A$ to $A+\Delta$, which stabilizes the system if $A+\Delta$ is Hurwitz. A matrix $\Delta$, which is associated with an acyclic graph $\CG_\Delta$ and $\CG_\Delta\subseteq \CG_A^{\rm mod}$, is called as \textit{realizable modification matrix}, and $\CG_\Delta$ is called \textit{realizable modification graph}, given that,  under these conditions, there exist vibrational inputs to realize the desired functional changes. Next, we show how to design such control inputs. 

By assumption, $\CG_\Delta$ is directed acyclic, then, from \cite{JBJ-GG:00},  it can be topologically ordered. Therefore, there exists a permutation matrix $Q$ such that the matrix $\Delta': =Q\Delta Q^{-1}$ is quasi-lower-triangular. One can let $A'=QAQ^{-1}$, and $U'(t)$ be the vibrational control matrix to $A'$. Let $x'=Qx$, one can derive that the controlled system becomes
\begin{align}\label{linear:system:orig}
	\dot x' = \Big(A'+\frac{1}{\varepsilon}U'\big(\frac{t}{\varepsilon}\big) \Big)x'. 
\end{align}
Now, to functionally change $A$ to $A+\Delta$, it becomes to change $A'$ to $A'+ \Delta'$.

To realize this functional change, we consider the following vibrational control matrix that is quasi-lower-triangular:
\begin{equation*}
	U'(t)=\begin{bmatrix}
		0&0& 0&\dots &0\\
		u'_{21}(t)&0 & 0&\dots &0\\
		u'_{31}(t) & u'_{32}(t) & 0 &\dots &0\\
		\vdots& \vdots& \vdots & \ddots &\vdots\\
		u'_{n1}(t) & u'_{n2}(t) & u'_{n3}(t) &\dots &0
	\end{bmatrix},
\end{equation*}
where $u'_{ij}(t)=u'_{ij}\sin(\beta'_{ij} t)$. Then, the state transition matrix of the system $\dot{\hat x}=U(t) \hat x$ has the following form:
\begin{equation*}
	\Phi'(t,t_0)=\begin{bmatrix}
		1&0& 0&\dots &0\\
		\phi'_{21}(t)&1 & 0&\dots &0\\
		\phi'_{31}(t) & \phi'_{32}(t) & 1 &\dots &0\\
		\vdots& \vdots& \vdots & \ddots &\vdots\\
		\phi'_{n1}(t) & \phi'_{n2}(t) & \phi'_{n3}(t) &\dots &1
	\end{bmatrix}.
\end{equation*}
According to \cite{SMM:80}, the averaged system matrix of \eqref{linear:system:orig}, denoted as $\dot{\bar x}=\bar A' \bar x $, satisfies $\bar A'=A'+\bar B$ with
\begin{align}\label{B_bar}
	\bar B =[\bar b_{ij}]= A^\top \odot C. 
\end{align}
Here, $C=[c_{ij}]$ satisfies
\begin{equation}\label{c_ij}
	c_{ij}= -\frac{1}{T} \int_{t=0}^{T}\big( \phi_{ij}' (\tau)\big)^2 dt .
\end{equation}
We need to design vibrational control inputs such that $\bar B = \Delta$. In other words, one needs to design vibrations such that
\begin{equation}\label{desin:c_ij}
	c'_{ij}= -\frac{d'_{ij}}{a'_{ji}}, \text{ if } a'_{ji}\neq 0; c'_{ij}= 0, \text{ otherwise}.
	\end{equation}
One can derive that, for any $i \ge 2$, 
\begin{align}
	\phi'_{i.i-1} (t)&=\int_{t_0}^{t} u'_{i,i-1}(\tau)  d \tau \nonumber\\
	& = -\frac{u'_{i,i-1}}{\beta'_{i,i-1} } \big(\cos(\beta'_{i,i-1}t)-\cos(\beta'_{i,i-1}t_0) \big), \label{phi:1}\\
	\phi'_{ij} (t) &=\int_{t_0}^{t} \sum_{k=1}^{i-1} u'_{ik}(\tau) \phi'_{kj} (\tau) d \tau, \text{ for } j \le i-2.\label{phi:2}
\end{align}

Combining the expressions in \eqref{phi:1} and \eqref{phi:2} with \eqref{c_ij}, one can derive the required ratios of the amplitudes and frequencies, i.e., $\frac{u'_{ij}}{\beta'_{ij}}$, in $U'(t)$.  First, for each $i\ge 2$, it follows from  \eqref{phi:1} and \eqref{c_ij} that
\begin{equation}\label{left_of_diagonal}
	\frac{u'_{i,i-1} }{\beta'_{i,i-1}}=\sqrt{2c'_{i,i-1}}=\sqrt{\frac{-2d'_{i,i-1}}{a'_{i-1,i}}}, \text{ if } a'_{i-1,i}\neq 0.
\end{equation}
Next, one can establish the remaining parameters within the vibrational control matrix through a recursive method. To do this, one calculates the necessary ratio $\frac{u'_{ij}}{\beta'_{ij}}$ by systematically deriving each required $\phi'_{ij}$ in the sequence outlined in Fig.~\ref{vib_design}.
\begin{figure}[t]
	\begin{center}
		\includegraphics[scale=0.7]{./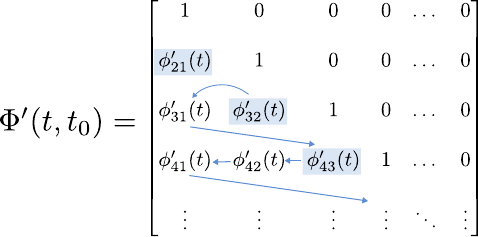}
	\end{center}
	\caption{Vibtaional control design. We first design the ratio $u'_{ij}/\beta'_{ij}$ in each vibrational  input $u'_{ij}(t)$, which can be done by  once $\phi'_{ij}(t)$ is derived. The entries that are one-element left to the diagonals of $\Phi'$ are expressed by \eqref{phi:1}, which leads to the corresponding ratios in \eqref{left_of_diagonal}.  The remaining entries of $\Phi'$ can be determined one by one in the order depicted by the above arrows, which can also be used to determine the ratios.  After all the ratios are determined, one can choose amplitudes and frequencies to satisfy the corresponding ratio.}
	\label{vib_design}
\end{figure} 

Once the ratios $\frac{u'_{ij}}{\beta'_{ij}}$ have  been determined, one can select amplitudes and frequencies to fulfill these ratios. Note that the frequencies $\beta'_{ij}$ need to be incommensurable, that is, each $\beta'_{ij}/\beta'_{k\ell}$ for any distinct pair of $\{i,j\}$ and $\{k,\ell\}$ is not a rational number. Consequently, the vibrational control matrix $U('t)$ is determined. Subsequently, as $A=Q^{-1}A' Q$,  the vibrational control to the original system \eqref{controlled_net_compact} becomes 
\begin{equation*}
	U(t)=Q^{-1}U'(t)Q.
\end{equation*}

\begin{figure*}[t]
	\centering
	\includegraphics[scale=1]{./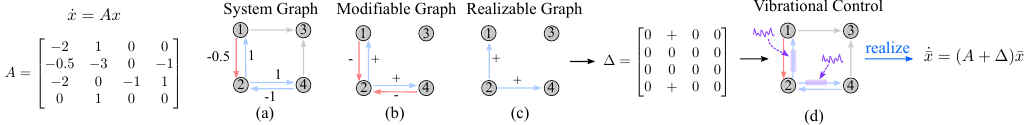}
	\caption{Illustration of the method to improve robustness of a linear stable system. (a) The directed graph associated with $\dot x =A x$. (b) Modifiable graph, where edges can be vibrationally changed (the signs indicate whether they can be increased or decreased). (c)-(d) If a matrix $\Delta$ corresponds to a singed and directed graph that is  directed acyclic and it is a subgraph of the modifiable graph, then there exists vibrational control to realize the averaged system $\dot {\bar x} =(A+\Delta)\bar x$. Careful design of $\Delta$ can increase the robustness of the original system.}
	\label{theo:linear}
\end{figure*} 

\begin{figure*}[t]
	\centering
	\includegraphics[scale=1]{./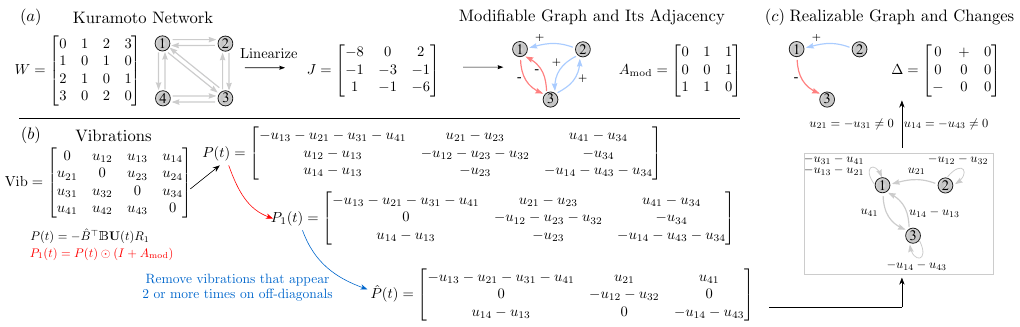}
	\caption{Illustration of making predictable changes in the Kuramoto-oscillator network. (a) One linearizes the Kuramoto model and obtain a linear system $\dot x = J x$. Following the steps in Section~\ref{sec:robustness}-\ref{subsec:linear}, a modifiable graph and its adjacency matrix can be found. (b) We investigate how vibrations in the Kuramoto-oscillator network influence the linearized system by computing $P=\hat B^\top \BB \BmU(t)R_1$ (in the figure, the incidence matrix $\hat B$ corresponds to the spanning tree  consisting of the edges $\{(3,1),(3,2),(3,4)\}$). From the modifiable graph in (a), we only keep elements in $P(t)$ that corresponds to modifiable edges, resulting in $P_1(t)$. Further, to make the design of vibrational control tractable, we remove the vibrations that appear two or more times in off-diagonal positions and obtain $\hat P(t)$. (c) We associate $\hat P(t)$ with a directed graph $\CG_{\hat P}$ that allows for self-loops. Then, it remains to configure vibrations such that the directed graph has no directed cycles (including self-loops). For instance, one can set $u_{21}(t)=-u_{31}(t)$, $u_{14}(t)=u_{43}(t)$, and any other vibrations to zero to realize the changes indicated in the upper panel. Any change that has the same pattern as $\Delta$ can be realized by choosing the amplitudes and frequencies in the vibrations $u_{21}(t)$ and $u_{14}(t)$ following the steps in Section~\ref{sec:robustness}-\ref{subsec:linear}.}
	\label{design:kuramoto}
\end{figure*}

\begin{remark}
Theorem~\ref{vibrational:design} also provides an approach to improve the robustness of the system by vibrational control. As illustrated in Fig.~\ref{theo:linear}, if there is a matrix $\Delta$ satisfying the conditions (i)-(ii) and $\CR(A+\Delta)>\CR(A)$, one can follow the above steps to design vibtational inputs to realize this functional change.
We wish to mention that it is likely impossible to improve the system to any desired robustness by just adding a matrix $\Delta$, especially when $\Delta$ is constrained by the graph structure. There are some interesting open questions. For instance, what is the realizable range of robustness levels? How to design $\Delta$ and the subsequent vibrational control to realize a desired and reasonable robustness? 
\end{remark}

\subsection{Kuramoto-Oscillator Networks}\label{subsec:kuramo}
Observe that, in each cluster, oscillators have an identical frequency, and each pair is coupled by bidirected edges with asymmetric strengths. To study how vibrational control can improve the robustness of synchronization in each cluster, we consider the following homogeneous Kuramoto model:
\begin{equation}
	\dot \varphi_i = \omega + \sum_{j=1}^{n} w_{ij} \sin(\varphi_j -\varphi_i),
\end{equation}
where $i=1,\dots,n$, and $w_{ij}$'s describe the directed network $\BG=(\BV,\BE)$ with $|\BE|=m$. Let $B\in \R^{n \times m}$ be the incidence matrix of $\BG$.  Select a directed spanning tree $\BG_{\rm span}$ in $\BG$, and let $\hat B\in \R^{n\times (n-1)}$ be its incidence matrix. Denote $\varphi=[\varphi_1,\dots,\varphi_m]$ and $\BmW=\diag([w_{ij}]_{(i,j)\in \BE})\in\R^{m\times m}$.

Let $x= \hat B^\top \varphi\in \R^{n-1}$, and following similar steps as Appendix~\ref{derivatin} one can derive that
\begin{equation}\label{homog:compact}
	\dot x = - \hat B^\top \BB \BmW \sin (R_1 x),
\end{equation}
where $\BB$ is obtained by replacing the negative elements in the oriented incidence matrix $B$ with $0$ and
\begin{equation*}
	R_1 = \begin{bmatrix}
		B^\top (\hat B^\top)^\dagger\\
		-B^\top (\hat B^\top)^\dagger
	\end{bmatrix}.
\end{equation*}

With vibrations injected into the edges in the network, we have the controlled model
\begin{equation}\label{homog:compact:controlled}
	\dot x = - \hat B^\top \BB \Big(\BmW+\frac{1}{\varepsilon}\BmU \big(\frac{t}{\varepsilon}\big)\Big) \sin (R_1 x),
\end{equation}
where $\BmU(t)=\diag([u_{ij}]_{(j,i)\in\BE})\R^{m\times m}$. 
Linearizing the system \eqref{homog:compact:controlled} at $x=0$, we obtain
\begin{equation}\label{ctr_KM:linearized}
	\dot x = - \hat B^\top \BB \Big(\BmW+\frac{1}{\varepsilon}\BmU\big(\frac{t}{\varepsilon}\big)\Big) R_1 x. 
\end{equation}
Denote $J=-\hat B^\top \BB \BmW R_1$. Similar to the previous section, one can associate \eqref{ctr_KM:linearized} with the following averaged system
\begin{equation}
	\dot{\bar x} = \bar J \bar x
\end{equation}
where 
\begin{equation*}
	\bar J =\lim_{T\to \infty}\frac{1}{T} \int_{t=0}^{T} \Phi^{-1}(t,t_0)J\Phi(t,t_0) d t,
\end{equation*}
and $\Phi(t,t_0)$ is the state transition matrix of the system
\begin{equation}
	\dot z = P(t) z, \text{where } P(t) =  - \hat B^\top \BB \BmU(t)R_1.
\end{equation}

Due to the presence of the matrices $\hat B,\BB$, and $R_1$, $P(t)$ has a very complex dependence on the vibrations injected to the edges in the Kuramoto-oscillator network. As a consequence, it becomes very challenging to design vibrational control $U(t)$ to modify the elements in $J$ so that the robustness of synchronization can be improved.

Our goal is to provide a tractable and predictable approach to design vibrational control to improve the robustness of the synchronization by functionally modifying the elements in $J$.
To this end, we will use the results we established for linear systems in Section~\ref{sec:robustness}-\ref{subsec:linear}.

Specifically, the $\CG_\Delta$ consists of the following steps (an example is provided in Fig.~\ref{design:kuramoto} to illustrate the procedure). 

(1) First, selecting a $\hat B$ we compute the Jacobian matrix $J$ and associate it with a weighted directed graph $\BG_J$. 

(2) Following the same steps in Section~\ref{sec:robustness}-\ref{subsec:linear}, one can identify a modifiable graph from $\BG_J$, denoted as $\BG_J^{\rm mod}$, containing  increasable and decreasable edges in $\CG_J$. Here, an edge $(k,\ell)$ is increasable (decreasable) if $j_{k\ell}<0$ (resp., $j_{k\ell}>0$). The definition here differs slightly from the one presented in the previous section. In linear network systems, an edge that does not exist is not considered part of the set of modifiable edges, as it cannot receive control inputs. However, in the linearized model of a Kuramoto network, a zero entry in the Jacobian matrix can still be influenced by vibrational control injected into the connections of the original Kuramoto network (an example is shown in Fig.~\ref{design:kuramoto}~(a)). Let $A_{\rm mod}$ be the unweighted adjacency matrix of $\CG_J$

(3) We assume that $\BmU(t)$ contains non-zero values at all off-diagonal positions (we will gradually set them to $0$ at positions that we do not intend to control). To assess how vibrations injected into the edges of the Kuramoto-oscillator network impact the edges within $\BG_J$, we calculate $P(t) = -\hat B^\top \BB \BmU(t)R_1$. Since whether an edge in $\BG_J$ can be altered is determined by its modifiable graph, we let $P_1(t) = P(t)\odot (I+A_{\rm mod})$. This operation retains only the elements corresponding to modifiable edges.

(4) While configuring control inputs, one needs to deal with the situation that a vibration introduced to a single edge in the Kuramoto-oscillator network can bring changes to multiple edges in $\BG_J$.  To make the design procedure more analytically tractable, we remove the vibrations that appear two or more times in the off-diagonal positions in $P_1(t)$ and obtain $\hat P(t)$. 

(5) We associate $\hat P(t)$ with a directed graph $\BG_{\hat P}$. Now, we configure vibrational control inputs such that $\BG_{\hat P}$ does not contain a directed cycle (also no self-loops). Consequently, the resulting graph determines realizable changes to $J$ that vibrations can bring in, which we refer to as a \textit{realizable graph}. For any $\Delta$, there always exists a vibrational control that functionally changes $J$ to $\bar J =J+\Delta$ if the associated directed and sign graph of $\Delta$ is a realizable graph (see Fig.~\ref{design:kuramoto}~(c)). One can simply use the results in Section~\ref{sec:robustness}-\ref{subsec:linear} to design vibrational inputs.

\begin{remark}
	The vibrations that exclusively affect the diagonal positions of $P(t)$ play a crucial role in the design process. They are often utilized to counteract the impact of other vibrations on the diagonal, ensuring that they only influence a single off-diagonal element in $P(t)$. To maximize the occurrence of vibrations exclusively in the diagonal positions, an effective approach is to select a spanning tree with minimal depth to define $\hat B$. For example, in Fig.~\ref{design:kuramoto} (a), one can opt for the spanning tree composed of edges ${(3,1),(3,2),(3,4)}$. This choice facilitates the isolation of vibrations to the diagonal positions.
	\end{remark}

\subsection{Design of Vibrational Control for Cluster Synchronization Stabilization}
Now, we can use the results in the previous section to design vibrations to stabilize cluster synchronization. 

Recall that oscillators in each cluster have an identical frequency. Following the same procedure as in the previous section, one can identify a modifiable graph for each cluster, defining the edges that can be functionally modified via vibrational control. Denote the modifiable graphs by $\BG^{(1)}_{\rm mod}, \BG^{(2)}_{\rm mod},, \dots,\BG^{(r)}_{\rm mod}$.

\begin{corollary}\label{coro}
	Consider matrices $\Delta^{(k)}\in \R^{n_k \times n_k}, k=1,2,\dots,r$, and let $\CG_{\Delta^{(k)}}$ be the directed and signed graph associated with them. 
	Assume that they satisfy the following conditions:
	
	(i) Each  $\CG_{\Delta^{(k)}}$ is acyclic and satisfies $\CG_{\Delta^{(k)}}\subseteq \BG^{(k)}_{\rm mod}$.
	
	(ii) The matrix $S=[s_{k \ell}]_{r\times r} $ defined by
	\begin{align*}
		s_{k\ell}=\Big\{\begin{matrix*}[l]
			\CR(J\PC{k}+\Delta\PC{k})-\bar\gamma_{k k} , &\text{ if }k=\ell,\\
			-\bar \gamma_{k \ell}, &\text{ if }k \neq\ell.
		\end{matrix*}
	\end{align*}
	is an $M$-matrix.
	
	Then, there exists $\varepsilon_0>0$ such that for any $\varepsilon<\varepsilon_0$, vibrational control inputs, which functionally change the Jacobian matrix $J\PC{k}$ in each cluster  to $J\PC{k}+\Delta\PC{k}, k=1,2,\dots, r$,  stabilize the cluster synchronization manifold $\CM$. 
\end{corollary}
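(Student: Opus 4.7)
The plan is to reduce the corollary to a direct application of Theorem~\ref{theorem:general} by first realizing the required averaged Jacobians cluster-by-cluster via the construction of Section~\ref{subsec:kuramo}. Specifically, for each cluster $k$, condition (i) says $\CG_{\Delta^{(k)}}$ is a directed acyclic subgraph of the modifiable graph $\BG^{(k)}_{\rm mod}$. I would invoke the five-step procedure of Section~\ref{subsec:kuramo} (which itself relies on Theorem~\ref{vibrational:design} for linear systems) to construct intra-cluster vibrations $\{u_{ij}(t)\}_{(j,i)\in\CE_k}$ such that the averaged intra-cluster block defined in \eqref{bar_J:k} satisfies $\bar J^{(k)} = J^{(k)} + \Delta^{(k)}$. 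Because the vibrations are supported only on intra-cluster edges, the balance condition \eqref{ctr:balance} is automatically fulfilled, the controlled dynamics reduce to the form \eqref{compact_form2}, and the averaged matrix $\bar J = \blkdiag(\bar J^{(1)},\dots,\bar J^{(r)})$ inherits the cluster block structure required by Theorem~\ref{theorem:general}.

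Next, I would verify the hypotheses of Theorem~\ref{theorem:general} one by one. Hurwitz-ness of $\bar J$ follows from condition (ii): an M-matrix has strictly positive diagonal entries, so $\CR(J^{(k)}+\Delta^{(k)}) = \lambda_{\max}^{-1}(\bar X_k) > 0$ for each $k$, meaning the solution $\bar X_k$ to the Lyapunov equation \eqref{Ly:controlled} (with $\bar J^{(k)} = J^{(k)}+\Delta^{(k)}$) is positive definite, hence $\bar J^{(k)}$ is Hurwitz. Substituting $\CR(\bar J^{(k)})=\lambda_{\max}^{-1}(\bar X_k)$ shows that the matrix $S$ defined in condition (ii) of the corollary coincides entry-wise with the matrix $S$ in Theorem~\ref{theorem:general}, so the M-matrix hypothesis transfers directly. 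The cross-cluster gains $\bar\gamma_{k\ell}$ are determined by the inter-cluster perturbation $N(y)$ in \eqref{linearized} and the transition matrix $\Phi$, and do not depend on the particular $\Delta^{(k)}$ chosen, so they are legitimately treated as fixed constants.

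With these two hypotheses established, Theorem~\ref{theorem:general} immediately provides $\varepsilon^\ast > 0$ such that for every $\varepsilon < \varepsilon^\ast$, the partial equilibrium $x=0$ of \eqref{compact_form2} is exponentially $x$-stable uniformly in $y$, which by Lemma~\ref{lemma:linearized} translates into exponential stability of the cluster synchronization manifold $\CM$ along the controlled Kuramoto dynamics \eqref{Kuramoto:controlled}. Setting $\varepsilon_0 = \varepsilon^\ast$ completes the argument.

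The main obstacle I anticipate is the faithful realization step: one must check that the construction of Section~\ref{subsec:kuramo}, whose analytical tractability comes from discarding off-diagonal repeated occurrences of vibrations in $P(t)$ (the map from $\hat P$ back to $P_1$), still yields an averaged matrix equal to the prescribed $J^{(k)}+\Delta^{(k)}$ and not only something matching it on the targeted edges. This is precisely why condition (i) requires $\CG_{\Delta^{(k)}}$ to be acyclic and contained in $\BG^{(k)}_{\rm mod}$: after topologically ordering, the recursive determination of the ratios $u'_{ij}/\beta'_{ij}$ in \eqref{left_of_diagonal}–\eqref{phi:2} produces the exact desired averaged increment via \eqref{B_bar}, while the freedom in choosing incommensurable frequencies absorbs the remaining off-diagonal contamination. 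Once this realization step is justified, the rest of the proof is a clean invocation of the results already established.
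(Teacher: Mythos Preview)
Your approach is essentially the paper's: the corollary is stated without a separate proof, being the direct combination of the realization construction (Section~\ref{subsec:kuramo}/Theorem~\ref{vibrational:design}) to obtain $\bar J^{(k)}=J^{(k)}+\Delta^{(k)}$ with Theorem~\ref{theorem:general} to conclude stability of $\CM$. One small correction: the constants $\bar\gamma_{k\ell}$ \emph{do} depend on the vibrational inputs through $\Phi$ (see the bound $c=\max_{k,\ell}\|(\Phi^{-1})^{(k)}\|\,\|\Phi^{(\ell)}\|$ in Lemma~\ref{lemma:bound:pert}), hence indirectly on the $\Delta^{(k)}$'s; this does not break your argument, since condition~(ii) is simply hypothesized to hold for whatever $\bar\gamma_{k\ell}$ arise from the chosen control, but your side remark that they are fixed independent of $\Delta^{(k)}$ is inaccurate.
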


We wish to mention that one can follow the same steps as in Sections~\ref{sec:robustness}-\ref{subsec:linear} and \ref{subsec:kuramo} to design the amplitudes and frequencies of vibrational inputs. 

\section{Numerical Study}\label{numerical}
In this section, we employ an example to show how to design vibrations to stabilize cluster synchronization in a Kuramoto-oscillator network. 

The network we consider is shown in Fig.~\ref{fig:kuramoto_stabilization}~(a). Partitioning the network into two clusters $\CC_1$ and $\CC_2$, Assumption~\ref{invariance} is satisfied so that the corresponding cluster synchronization manifold $\CM$ is invariant. However, this pattern of cluster synchronization is unstable (see in Fig.~\ref{fig:kuramoto_stabilization}~(b)). Then, we design a vibrational control to stabilize it.

We observe that the first cluster has the same network structure as that in Fig.~\ref{design:kuramoto}. Within each cluster, one can derive that the linearized system $\dot x\PC{k}= J^{(k)} x\PC{k}$ has
\begin{equation*}
	J^{(1)}=\alpha\begin{bmatrix}
		-8 & 0 & 2\\
		-1 & -4 & -1\\
		1 & -1 & -5\\
	\end{bmatrix}, \quad
	J^{(2)}=\begin{bmatrix}
		-3 & 0 & 1\\
		-1 & -2 & 1\\
		1 & 0 & -3\\
	\end{bmatrix},
\end{equation*}
where $\alpha=0.05$. Following the definition \eqref{measure:robust}, one can compute that $\CR(J^{(1)})=0.305$ and $\CR(J^{(2)})=3.62$. One notices that the robustness of synchronization within $\CC_1$ is very small. Therefore, our objective becomes to improve the robustness in $\CC_1$. Since cluster $\CC_1$ share the same network structure as that in Fig.~\ref{design:kuramoto}, we can use the realizable graphs and changes identified there. Particularly, we choose 
\begin{equation*}
	\Delta\PC{1}=0.05 \begin{bmatrix}
		0 & 1& 0\\
		0 &0 &0\\
		-1 &0 & 0
	\end{bmatrix}
\end{equation*}
as the change we want to bring to $\CC_1$. 

 \begin{figure}[t]
	\centering
	\includegraphics[scale=1]{./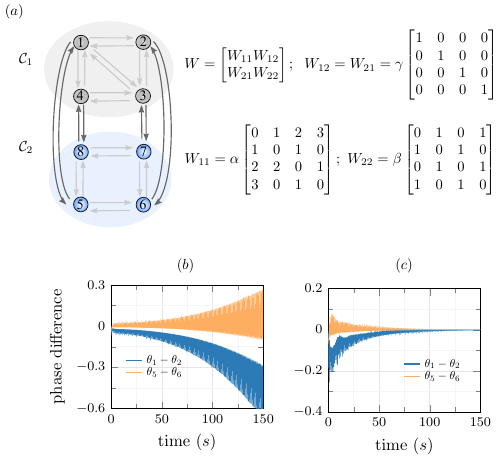}
	\caption{Vibrational stabilization of a cluster synchronization manifold. (a) The network structure and the connection weights, where $\alpha=0.05$, $\beta=1$, and $\gamma=3$ (b) Phase differences without control, indicating that the cluster synchronization is unstable. (c) Phase differences under vibrational control to the cluster $\mathcal{C}_1$. showing that the cluster synchronization has been stabilized by local vibrations. The natural frequencies in $\mathcal{C}_1$ and $\mathcal{C}_2$ are $\omega_1=1$ and $\omega_2=10$, respectively.}
	\label{fig:kuramoto_stabilization}
\end{figure}

One  can compute that this change improves the robustness from $\CR(J^{(1)})=0.305$ to $\CR(J^{(1)}+\Delta\PC{1})=0.332$.  

 Following the steps in Theorem~\ref{vibrational:design}, to realize the changes described by $\Delta^{(1)}$, we inject the vibrations below to the connections $a_{21}, a_{31}, a_{14}$, and $a_{45}$, respectively:

\begin{align*}\bl
	&u_{21}(t)=-u_{31} (t) = \frac{k_1}{\varepsilon}\sin\big(\frac{\beta_1t}{\varepsilon}\big),\\
	&u_{14}(t)=-u_{43} (t) = \frac{k_2}{\varepsilon}\sin\big(\frac{\beta_2t}{\varepsilon}\big),
\end{align*}
where $\beta_1=1$, $\beta_2=\sqrt{2}$, and
\begin{equation*}	
	k_1=\sqrt{-\frac{0.05}{J\PC{1}_{21}}}, \text{and } k_2= \sqrt{\frac{0.1}{J\PC{1}_{13}}}.
\end{equation*}

Now, according to Theorem~\ref{theorem:general}, there exist a threshold for $\varepsilon$ such that $\varepsilon$ needs to be less than it. Yet, how to identify that threshold is still an open problem. In practice, one can simply try to choose a small $\varepsilon$. Fortunately, $\varepsilon$ often does not need to be very small. For instance, as it is shown in Fig.~\ref{fig:kuramoto_stabilization}, $\varepsilon=0.01$ is sufficient to stabilize the cluster synchronization.

We wish to mention that the condition in Theorem~\ref{theorem:general} and Corollary~\ref{coro} are not even satisfied. This indicates that the condition we have identified is still a bit conservative.  More tight conditions call for future studies. However, it is worth highlighting the power of vibrational control since a slight improvement on the robustness effectively stabilizes the cluster synchronization.

\section{Discussion}
Cluster synchronization plays an very important role in many natural and man-made systems. Losing of stability of desired patterns of cluster synchronization often means malfunction. In this paper, we study how vibrational control, an open-loop control strategy, can be used to stabilize cluster synchronization. We construct some sufficient conditions under which a vibrational control stabilizes cluster synchronization. We show that one of the working mechanisms of vibrational control is that it improves the robustness of local synchronization within each cluster. Further, we provide a tractable approach to design vibrational control inputs.  We utilize numerical experiments to valide our theoretical findings. 

We conjecture that vibrational control can provide some interpretation of how deep brain stimulation works, which can potentially inform the design of better brain stimulation therapies. In the future, we are interested in extending the existing open-loop vibrational control strategy to closed-loop ones, even in the presence of imperfect measurement of system states. This hopefully would contribute to the design of better closed-loop deep brain stimulation.

\appendix

\subsection{Derivation of the Compact System}\label{derivatin}

Without loss of generality, we arrange the columns in the incidence matrices, as defined in Section~\ref{prelim}~\ref{graph:notation}, in a manner such that the intra-cluster edges within the subnetwork precede the inter-cluster edges. Mathematically, we have

\begin{equation*}
	\begin{matrix*}[l]
				B=[B_\intra, B_\inter],&B_\intra=\blkdiag( B^{(1)}_{\intra},\dots, B^{(r)}_{\intra})\\
				\hat B=[\hat B_\intra, \hat B_\inter],&\hat B_\intra=\blkdiag(\hat B^{(1)}_{\intra},\dots,\hat B^{(r)}_{\intra}).
		\end{matrix*}
\end{equation*}

Recall that $w_{ij}$'s are the connection weights in the network $\CG$. Now, we define the diagonal matrix consisting of these weights by
\begin{equation}\label{dia:weights}
	\BmW=\begin{bmatrix}
			\BmW_\intra&0\\
			0&\BmW_\inter
		\end{bmatrix}\in \R^{m \times m}
\end{equation}
where $\BmW_\intra:=\diag\{w_{ij},(i,j)\in \CE_\intra\}$ and $\BmW_\inter:=\diag\{w_{ij},(i,j)\in \CE_\inter\}$ are also diagonal matrices, containing the weights of intra- and inter-cluster connections, respectively. Likewise, one can use 
\begin{equation}
	\BmV(t)=\begin{bmatrix}
		\BmV_\intra(t)&0\\
		0&\BmV_\inter(t)
	\end{bmatrix}
\end{equation}
to denote the diagonal matrix that contains the vibrations injected to the corresponding edges in \eqref{dia:weights}.  Note that the bold notation $\BmW$ and $\BmV$ are  different from $W$ and $V$ defined in Section~\ref{no_input}; in fact, they are obtained by rewriting the non-zeros entries in $W$ and $V$ into a diagonal matrix, respectively. 
Now, one can rewrite the controlled system \eqref{Kuramoto:controlled} into
\begin{equation}
	\dot \theta = \omega - \BB (\BmW+\BmV(t)) \sin(B^\top \theta), 
\end{equation}
where $\omega=[\omega_1,\omega_2,\dots, \omega_n]^\top$, $\BB$ is a matrix obtained by replacing the negative elements in the oriented incidence matrix $B$ with $0$.  Likewise, we define $\BB_\intra$ and $\BB_\inter$, and then we have $\BB=[\BB_\intra,\BB_\inter]$. Also, we write $\BB_\intra=[\BB_\intra\PC{1},\BB_\intra\PC{2},\dots,\BB_\intra\PC{r}]$.

Recall that $x=\hat B_\intra^\top \theta$ and $y=\hat B_\inter^\top \theta$. Then, it holds that 
\begin{subequations}\label{derive:intermediate}
\begin{align}
	&\dot x = -  \hat B_\intra^\top \BB (\BmW+\BmV(t)) \sin(B^\top \theta),\\
	&\dot y = \hat B_\inter^\top \omega - \hat B_\inter^\top \BB (\BmW+\BmV(t)) \sin(B^\top \theta), 
\end{align}
\end{subequations}
where the fact that identical intra-cluster natural frequencies imply $\hat B^\top_\intra \omega=0$ has been used.
Now, it remains to write $B^\top \theta$ into a function of $x$ and $y$. 

To this end, we provide the following instrumental lemma. Below, $\bar B$ is the incidence matrix of the undirected counterpart of $\CG$ (replacing every pair of bidirectional edges with one undirected edge); and $\bar B_\inter$ and $\bar B_\intra$ are also the undirected counterparts of $ B_\inter$ and $ B_\intra$, respectively. The matrix $\bar B_\intra$ can be decomposed with respect to different clusters as $\bar B_\intra=[\bar B_\intra^{(1)},\dots, \bar B_\intra^{(r)}]$. Also, we define two projection matrices
\begin{equation*}
	P_\intra = I_n-\hat B_\intra \hat B_\intra^\dagger, P_\inter=I_n-\hat B_\inter \hat B_\inter^\dagger,
\end{equation*}
where $(\cdot)^\dagger$ denotes the pseudo-inverse of a matrix. 

\begin{lemma}\label{incidence:transfer}
	For the incidence matrices $B\in \R^{n \times m}$ and $\hat B\in \R^{n \times (n-1)}$ of the graph $\CG$ and its directed spanning tree $\hat {\CG}$, there exists 
	\begin{equation*}
		R= \begin{bmatrix}
		R_1 & \BmZr\\
		R_2 &R_3
		\end{bmatrix}
	\end{equation*}
	such that $B^\top =R\hat B^\top$,  where 
	\begin{align*}
		R_1= \begin{bmatrix}
			R_1'\\-R_1'
		\end{bmatrix}, R_2= \begin{bmatrix}
		R'_2\\-R_2', 
		\end{bmatrix},R_3= \begin{bmatrix}
		R'_3\\-R_3'
		\end{bmatrix},
	\end{align*}
	with 	
	\begin{align*}
		&R'_1= \bar B_\intra^\top (\hat{B}_\intra^\top P_\inter)^\dagger,\\
		&R'_2=\bar B_\inter^\top (\hat B_\intra ^\top P_\inter)^\dagger,\\
		&R'_3=\bar B_\inter^\top (\hat B_\inter ^\top P_\intra)^\dagger.  \hspace{2cm}\QEDA
	\end{align*} 
\end{lemma}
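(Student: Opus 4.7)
The plan is to verify the explicit formulas directly, by checking that $R\hat B^\top = B^\top$ block by block. Because $\hat \CG$ is a spanning tree of $\CG$, the matrix $\hat B$ has full column rank $n-1$, so any solution $R$ of $R\hat B^\top = B^\top$ is uniquely determined. The work therefore reduces to three sub-tasks: (i) justifying that the $(1,2)$-block of $R$ vanishes, (ii) accounting for the bidirected $\pm$ symmetry that produces the $[R_i';\,-R_i']$ stacking, and (iii) verifying that the explicit pseudo-inverse formulas realize this unique solution.

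For (i), any intra-cluster edge $(i,j)\in \CE_\intra$ has both endpoints in a single $\CP_k$, so the corresponding row of $B_\intra^\top$ applied to $\theta$ equals $\theta_j-\theta_i$. Since $\hat \CG_k$ is a spanning tree of the strongly connected $\CG_k$, this difference can be written as a linear combination of the rows of $\hat B_\intra^{(k)\top}$, without invoking any inter-cluster tree edge. Hence the rows of $B_\intra^\top$ lie entirely in the row space of $\hat B_\intra^\top$, which forces the upper-right block of $R$ to vanish. For (ii), the bidirected assumption means that each undirected edge contributes two directed columns of $B$ that are negatives of each other; after ordering these pairs consecutively, the corresponding rows of $B^\top$ differ only in sign, so each block of $R$ inherits the form $[R_i';\,-R_i']$.

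Step (iii) is the main technical obstacle. The equation $R_1'\hat B_\intra^\top = \bar B_\intra^\top$ has a unique solution because $\hat B_\intra^\top$ has full row rank $n-r$, so I must verify that $\bar B_\intra^\top(\hat B_\intra^\top P_\inter)^\dagger$ coincides with this unique solution. I would first show that $\hat B_\intra^\top P_\inter$ still has rank $n-r$, using the direct-sum decomposition $\R^n = \mathrm{range}(\hat B_\intra)\oplus\mathrm{range}(\hat B_\inter)\oplus \R\BONE$ that follows from $\hat \CG$ being a spanning tree of a connected graph. Then I would invoke the Moore--Penrose identity $AA^\dagger A=A$ together with the observation that the rows of $\bar B_\intra^\top$ lie in the row space of $\hat B_\intra^\top P_\inter$. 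The role of the projector $P_\inter$, in place of the naive $(\hat B_\intra^\top)^\dagger$, is to decouple the intra- and inter-cluster systems: it strips the $\hat B_\inter$-component from $\hat B_\intra^\top$ so that, in the joint equation $R_2'\hat B_\intra^\top + R_3'\hat B_\inter^\top = \bar B_\inter^\top$, the inter-cluster tree contribution is carried cleanly by $R_3'$ via $(\hat B_\inter^\top P_\intra)^\dagger$.

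For the second block, the analogous verification of $R_2'\hat B_\intra^\top + R_3'\hat B_\inter^\top = \bar B_\inter^\top$ would split every inter-cluster difference $\theta_j-\theta_i$ (with $i\in\CP_k$, $j\in\CP_\ell$, $k\neq \ell$) into three pieces --- an intra-cluster path within $\CP_k$, an inter-cluster tree jump from $\CP_k$ to $\CP_\ell$, and an intra-cluster path within $\CP_\ell$ --- and match each piece with the appropriate term on the left-hand side after the projectors $P_\intra$ and $P_\inter$ are applied. Throughout, the main difficulty is the careful handling of Moore--Penrose pseudo-inverses for the rank-deficient matrices $\hat B_\intra^\top$ and $\hat B_\inter^\top$, which is precisely why the projectors appear in the stated formulas.
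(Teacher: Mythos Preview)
Your plan is sound and would go through, but it differs substantially from the route the paper actually takes. The paper's proof is essentially two lines: it observes the bidirected $\pm$ structure $B=[\bar B_\intra,-\bar B_\intra,\bar B_\inter,-\bar B_\inter]$ (your step~(ii)), and then \emph{cites} \cite{KR-IH:2021} for the identity
\[
\bar B^\top=\begin{bmatrix}R_1'&\BmZr\\R_2'&R_3'\end{bmatrix}\hat B^\top,
\]
from which the lemma follows by substitution. All of the pseudo-inverse verification you outline in step~(iii) is precisely the content of that external result, which the authors do not reprove.

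What your approach buys is a self-contained argument: you would be establishing the cited identity from scratch via the oblique-projection interpretation of $(\hat B_\intra^\top P_\inter)^\dagger$ and $(\hat B_\inter^\top P_\intra)^\dagger$. Concretely, once you observe (as you sketch) that $\hat B_\intra^\top(\hat B_\intra^\top P_\inter)^\dagger=I_{n-r}$ because $\hat B_\intra^\top P_\inter$ has full row rank, the first block equation $R_1'\hat B_\intra^\top=\bar B_\intra^\top$ follows immediately from $\text{range}(\bar B_\intra)\subseteq\text{range}(\hat B_\intra)$. For the second block, the operator $\Pi_1+\Pi_2$ with $\Pi_1=\hat B_\intra(\hat B_\intra^\top P_\inter\hat B_\intra)^{-1}\hat B_\intra^\top P_\inter$ and the analogous $\Pi_2$ is the oblique projection onto $\text{range}(\hat B)$ along $\R\BONE$, which fixes every column $e_j-e_i$ of $\bar B_\inter$; this is exactly the ``three-piece path'' decomposition you describe, phrased algebraically. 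So your outline is correct; it is simply more laborious than the paper's citation-based proof, while being more transparent about why the projectors $P_\intra,P_\inter$ appear.
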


\begin{proof}
	As $\bar B$ is the incidence matrix of the undirected counterpart of $\CG$, we can write 
	\begin{equation}\label{Lem4:1}
		B = [\bar B_\intra, -\bar B_\intra, \bar B_\inter, - \bar B_\inter].
	\end{equation}
	The work \cite{KR-IH:2021} has shown that 
	\begin{equation*}
		\bar B^\top=\begin{bmatrix}
			\bar B_\intra ^\top \\
			 \bar B_\inter ^\top
		\end{bmatrix}=\begin{bmatrix}
		R'_1 & \BmZr\\
		R'_2 &R'_3
		\end{bmatrix} \hat B^\top,
	\end{equation*}
	which means that 
	\begin{equation*}
		\bar B_\intra ^\top = R'_1 \hat B^\top, \text{ and }\bar B_\inter ^\top= R'_2 \hat B_\intra^\top + R'_3 \hat B_\inter^\top .
	\end{equation*}
	Substituting them into \eqref{Lem4:1} completes the proof.		
\end{proof}

Applying Lemma~\ref{incidence:transfer} to Eq. \eqref{derive:intermediate}, we obtain
\begin{equation*}
	B^\top \theta= \begin{bmatrix}
		R_1 & 0\\
		R_2 &R_3
	\end{bmatrix}\hat B^\top \theta= \begin{bmatrix}
	R_1 & 0\\
	R_2 &R_3
	\end{bmatrix} \begin{bmatrix}
	x\\ y
	\end{bmatrix}.
\end{equation*}

Subsequently, one can derive that the functions in the compact system \eqref{compact_form} are given by
\begin{subequations}\label{expr:function}
\begin{align}
	&f_\intra (x) = -\hat B_\intra^\top \BB_\intra \BmW_\intra \sin (R_1x) \\
	&f_\inter (x,y)= - \hat B_\intra^\top \BB_\inter \BmW_\inter \sin(R_2 x +R_3y) \label{expr:inter-pert}\\
	&h_{\ctl}(V(t),x,y)=- \hat B_\intra^\top \BB_\intra \BmV_\intra(t) \sin (R_1x) \nonumber\\
	& \hspace{1.2cm} - \hat B_\intra^\top \BB_\inter \BmV_\inter(t) \sin(R_2 x +R_3y), \\
	&g(x,y)=\hat B_\inter^\top \omega - \hat B_\inter^\top \BB_\intra \BmW_\intra \sin (R_1x) \\
	&\hspace{1.4cm} - \hat B_\inter^\top \BB_\inter \BmW_\inter \sin(R_2 x +R_3y) \\
	&h'_\ctr(V,x,y)= - \hat B_\inter^\top \BB_\intra \BmV_\intra(t) \sin (R_1x) \nonumber\\
	& \hspace{1.4cm} - \hat B_\inter^\top \BB_\inter \BmV_\inter(t) \sin(R_2 x +R_3y).
\end{align}
\end{subequations}

Similar to $\BmV(t)$, one can define $\BmU(t):=\diag(u_{ij}(t):(i,j)\in \CE)$, and it follows that $\BmV(t) = \frac{1}{\varepsilon}\BmU(\frac{t}{\varepsilon})$. Then, \eqref{expr:function} defines the functions in \eqref{compact_form}. 

The function $f_\intra$ characterizes the inherent dynamics occurring within the clusters, while $f_\inter$ accounts for the dynamics resulting from the inter-cluster connections. The functions $h_{\ctl}$ and $h'_{\ctl}$ delineate the impact on the dynamics introduced by the vibrational control inputs.

In this paper, we specifically investigate a form of vibrational control where vibrations are only introduced to the intra-cluster connections. Then, we have $\BmV_\inter(t)=0$. As a consequence, 
\begin{align*}
	&h_{\ctl}(V(t),x,y) = \hat B_\intra^\top \BB_\intra \BmV_\intra(t) \sin (R_1x),\\
	& h'_\ctr(V,x,y)= - \hat B_\inter^\top \BB_\intra \BmV_\intra(t) \sin (R_1x),
\end{align*}
which no longer depend on $y$. Therefore, we denote them as $h_\ctr(\BmV(t),x)$ and $h'_\ctr(\BmV(t),x)$ for notational simplicity. 

\subsection{Proof of Lemma~\ref{lemma:linearized}}\label{proof:linearized}
Since $x=0$ is exponentially stable uniformly in $y$ for the system \eqref{linearized}, according to the converse Lyapunov theorem  (see \cite[Th. 4.4]{haddad2011nonlinear} and \cite{QY-KY-BDOA-CM:2021}) there exists $\CD=\{\bar x\in\R^{n-r}:\|\bar x\|\le \rho_1\}$ and a  continuously
differentiable function function $V:[0,\infty]\times \CD\times \R^r\to \R $ such that 
\begin{align*}
			\frac{\partial V}{\partial t}+\frac{\partial V}{\partial \bar x}\big( (J + P(t)) \bar x+N(y)\bar x\big)\le -c_1\|\bar x\|^2
\end{align*}
and $\|\frac{\partial V}{\partial x}\|\le c_2\|\bar x\|$ for some constants $c_1,c_2> 0$.  
Let 
\begin{equation*}
	h(t,x)=f_\intra(x)+f_\ctr(U(t),x)+f_\inter(x,y)
\end{equation*}
and 
$
\Delta(t,x)=h(t,x)-(J + P(t))x-N(y)x.
$
It can be checked that $\partial h/\partial x$ is bounded and Lipschitz on $\CD$. Then, similar to the proof of \cite[Th. 4.13]{HKK:02-bis}, one can show that $\|\Delta(t,x)\|\le c_3 \|x\|^2$ for some $c_3>0$. The time derivative along the system  \eqref{compact_form} satisfies
\begin{align*}
	&{\frac{\partial V}{\partial t}+\frac{\partial V}{\partial x}\Big( \big(J + P(t)+N(y) \big)x+\Delta(t,x)\Big)}\\
	&\le -c_1\|x\|^2+c_2c_3\|x\|^3\\
	&\le -(c_1-c_2c_3\rho)\|x\|^2, \forall \|x\|< \rho.
\end{align*}
Choosing $\rho=\min\{\rho_1,c_1/c_2c_3\}$ completes the proof. 

\subsection{Proof of Lemma~\ref{lemma:inc_dec}} \label{pf:inc_dec}

To construct the proof, one just needs to find vibrational control inputs that increase/decrease the weight of the edge $(i,j)$ functionally in both situations.

Consider a vibrational control that is only injected to the edge $(i,j)$, i.e., $V(t)$ in \eqref{controlled_net_compact} satisfies $v_{pq}(t)=0$ for any $(p,q)\neq (i,j)$ and $V_{ji}(t)\neq 0$. One can label the nodes in the expanded network $\CG$ such that $i=1$ and $j=n$. Then, the vibrational control matrix becomes 
\begin{equation*}
	V(t)=\begin{bmatrix}
		0&0&\cdots&0 \\
		0&0&\cdots&0\\
		\vdots&\vdots&\ddots&\vdots \\
		v_{n1}(t)&0&0&0 
	\end{bmatrix},
\end{equation*}
which has a quasi-lower-triangular form. Following the steps in \cite{SMM:80}, one can derive that $\bar A$ in the averaged system \eqref{linear:averaged} is
\begin{equation*}
	\bar A = A + \bar B, \text{ where } \bar B = \begin{bmatrix}
		0&0&\cdots&0 \\
		0&0&\cdots&0\\
		\vdots&\vdots&\ddots&\vdots \\
		b_{n1}(t)&0&0&0 
	\end{bmatrix},
\end{equation*}
where $b_{n1} = -a_{1n} \lim_{T \to \infty}\frac{1}{T} \int_{t=0}^{T} F^2_{n1}(t) dt$ with $F_{n1}(t)=\int_{0}^{t}v_{n1}(\tau) d\tau$. If the edge $(n,1)$ has a positive weight, i.e., $a_{1n}> 0$, $b_{n1}<0$. If $a_{1n}<0$, we have $b_{n1}>0$,  which completes the proof.

\subsection{Proof of Theorem~\ref{theorem:general}} \label{proof:general}
One can observe that (i) implies (ii). Then, it suffices to prove the exponential stability of $x=0$ for \eqref{change:time:scale}. To do that, we first present the following lemma. 
\begin{lemma}[Growth bound of perturbations]\label{lemma:bound:pert}
	There exist some constants $\bar \gamma_{k\ell}>0$, $k,\ell=1,\dots,r$, such that, for any $k$, it holds that 
	$
	\left\|G\PC{k}(y) z  \right\|\le \sum_{\ell=1}^{r}\bar \gamma_{k\ell} \|z_\ell\|.
	$\QEDA
\end{lemma}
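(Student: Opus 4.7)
The plan is to exploit the block-diagonal structure of $\Phi$ inherited from that of $P(s)$ and to reduce the claim to a uniform norm bound on each block of $G(y)$. Writing $\Phi = \blkdiag(\Phi^{(1)},\ldots,\Phi^{(r)})$ and partitioning $N(y)$ conformally into blocks $N^{(k,\ell)}(y) \in \R^{(n_k-1)\times(n_\ell-1)}$, a direct computation using $z = (z_1^\top,\ldots,z_r^\top)^\top$ gives
\begin{equation*}
G^{(k)}(y)\, z \;=\; \bigl(\Phi^{(k)}\bigr)^{-1} \sum_{\ell=1}^{r} N^{(k,\ell)}(y)\, \Phi^{(\ell)} z_\ell,
\end{equation*}
so by the triangle inequality and submultiplicativity
\begin{equation*}
\bigl\|G^{(k)}(y)\, z\bigr\| \;\le\; \bigl\|(\Phi^{(k)})^{-1}\bigr\| \sum_{\ell=1}^{r} \bigl\|N^{(k,\ell)}(y)\bigr\|\, \bigl\|\Phi^{(\ell)}\bigr\|\, \|z_\ell\|.
\end{equation*}
It then suffices to bound each of the three factors uniformly and absorb them into $\bar\gamma_{k\ell}$.

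For the middle factor, I would use the closed-form expression $N(y) = -(\hat B_\intra)^\top \BB_\inter \BmW_\inter (\BONE_{n-r}^\top \otimes \sin(R_3 y))\odot R_2$ recorded just before the lemma. Each block $N^{(k,\ell)}(y)$ is a product of fixed matrices times an entrywise-$\sin$ factor bounded by $1$, so $\nu_{k\ell} := \sup_{y}\|N^{(k,\ell)}(y)\|$ is finite and independent of $y$. For the outer factors, I would invoke that $P(s)$ is almost periodic with zero mean and has the specific form $P^{(k)}(s) = -(\hat B^{(k)}_\intra)^\top \BB^{(k)}_\intra \BmU^{(k)}(s) R_1$, where $\BmU^{(k)}$ is a diagonal of pure sinusoids. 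Consequently the antiderivative of $P^{(k)}$ is itself almost periodic, and the standard averaging/vibrational-control argument yields that $\Phi^{(k)}(s,s_0)$ and its inverse are almost periodic in $s$ and hence uniformly bounded; set $\phi_k^+ := \sup_s \|\Phi^{(k)}(s,s_0)\|$ and $\phi_k^- := \sup_s \|(\Phi^{(k)})^{-1}(s,s_0)\|$. Taking $\bar\gamma_{k\ell} := \phi_k^-\, \nu_{k\ell}\, \phi_\ell^+$ finishes the proof.

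The only nontrivial step is establishing the uniform boundedness of $\Phi^{(k)}$ and $(\Phi^{(k)})^{-1}$, since this is not automatic for an arbitrary mean-zero almost periodic coefficient matrix. The way I would handle it cleanly is via a Magnus-type expansion of $\Phi^{(k)}(s,s_0) = \exp(\Omega^{(k)}(s,s_0))$: the first Magnus term $\int_{s_0}^{s} P^{(k)}(\tau)\,d\tau$ is bounded because it is a trigonometric polynomial in $s$ with zero constant term, and the higher-order iterated-integral terms inherit the same boundedness provided the driving frequencies $\{\beta_{ij}\}$ are incommensurable, which is the standing assumption used elsewhere in the paper. Uniform boundedness of $(\Phi^{(k)})^{-1}$ then follows from $(\Phi^{(k)})^{-1} = \exp(-\Omega^{(k)})$. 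Alternatively, one can appeal directly to classical boundedness results for vibrational control (e.g., Bellman--Bentsman--Meerkov), which apply verbatim to the structure of $P^{(k)}$ considered here.
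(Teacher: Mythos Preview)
Your proposal is correct and follows essentially the same route as the paper: block-partition $N(y)$ conformally with $\Phi=\blkdiag(\Phi^{(1)},\ldots,\Phi^{(r)})$, apply the triangle inequality and submultiplicativity, bound $\|N^{(k,\ell)}(y)\|$ uniformly in $y$ via the $\sin(R_3 y)$ term in the closed form of $N(y)$, and bound $\|\Phi^{(k)}\|$, $\|(\Phi^{(k)})^{-1}\|$ by almost periodicity. The paper simply asserts the almost-periodicity/boundedness of $\Phi$ and its inverse without further justification, so your Magnus-expansion discussion is in fact more careful than what the paper provides; the only cosmetic difference is that the paper absorbs the $\Phi$-bounds into a single constant $c=\max_{k,\ell}\|(\Phi^{-1})^{(k)}\|\,\|\Phi^{(\ell)}\|$ rather than your per-block $\phi_k^-\,\phi_\ell^+$.
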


\begin{proof}
	To construct the proof, we write $N(y)$ into a block-diagonal form
	\begin{equation*}
		N(y)=\begin{bmatrix}
			N^{11}(y)&\dots& N^{1r}(y)\\
			\vdots& \ddots & \vdots\\
			N^{r1}(y)&\dots& N^{rr}(y)
		\end{bmatrix},
	\end{equation*}
	where each $N^{ii}(y)\in \R^{(n_i-1)\times (n_i-1)}$. Then, it follows that 
	\begin{equation*}
		G\PC{k}(y) z = \sum_{\ell =1}^{r} \big(\Phi^{-1}\big)^{(k)} N^{(k\ell)}(y ) \Phi^{(\ell)} z_{\ell}.
	\end{equation*}
	Recall that $\Phi$ and $\Phi^{-1}$ are almost periodic, $\|\Phi\|$ and $\|\Phi^{-1}\|$ are both bounded. Let $c=\max_{k,\ell}\|\big(\Phi^{-1}\big)^{(k)}\| \cdot \|\Phi^{(\ell)}\|$. Then, one can derive that
	\begin{equation*}
		\left \| G\PC{k}(y) z \right \| \le c \sum_{\ell =1}^{r} \|N^{(k\ell)}(y ) z_\ell\| \le c \sum_{\ell =1}^{r} \|N^{(k\ell)}(y )\| \|z_\ell\|. 
	\end{equation*}
	Now, it remains to bound $\|N^{(k\ell)}(y )\|$. Recall that $N(y)$ is given by 
	\begin{equation*}
		N(y)=-(\hat B_\intra)^\top \BB_\inter W_\inter (\BONE_{r-1}\otimes\sin(R_3y))\odot R_2.
	\end{equation*}
	Since $\sin(R_3)y$ is bounded independent of $y$, there exists a constant $c_{k\ell}$ such that $\|N^{k\ell}(y)\|\le c_{k\ell}$. Letting $\bar \gamma_{k\ell}=c c_{k\ell}$ completes the proof.
\end{proof}

Let $\scalemath{0.85}{V_k=z_k^\top \bar X_kz_k}$, and we have 
\begin{equation*}
	{\|\partial V_k/\partial z_k \| \le \lambda_{\max}( \bar X_k)\|z_k\|}.
\end{equation*}
	 Choose $\scalemath{0.85}{V(z)= \sum\nolimits_{k=1}^{r}d_k V_k}$ as a Lyapunov candidate. The time derivative of $V(z)$ satisfies
\begin{align*}
	{\dot V(z)}&=\sum_{k=1}^{r} d_k[z_k^\top((\bar J\PC{k})^\top  \bar X_k+  \bar X_k \bar J\PC{k})z_k \\
		&+\frac{\partial V}{\partial z_k} \Phi^{-1} N(y) \Phi z_k]\\
	&{\le \sum_{k=1}^{r} d_k[-\|z_k\|^2 + \lambda_{\max}( \bar X_k)\sum_{\ell =1}^{r}\bar \gamma_{k \ell} \|z_k\|\|z_\ell\| ]},
\end{align*}
where the inequality has used Lemma~\ref{lemma:bound:pert}.

Let ${D:=\diag(d_1,\dots,d_r)}$ and ${\hat S=[\hat s_{ij}]_{r\times r}}$ where 
\begin{align*}
	\hat s_{k\ell}=\Big\{\begin{matrix*}[l]
		1- \lambda_{\max}( \bar X_k) \bar \gamma_{k k} , &\text{ if }k=\ell,\\
		-\lambda_{\max}( \bar X_k)\bar \gamma_{k \ell}, &\text{ if }k \neq\ell.
	\end{matrix*}
\end{align*}
Then, one can rewrite ${\dot V(z)\le -\frac{1}{2}z^\top (DS+S^\top D)z}$. By assumption, $S$ is an $M$-matrix, and so is $\hat S$ since $\hat S=S \cdot\diag(\lambda_{\max}( \bar X_1),\dots,\lambda_{\max}( \bar X_r))$. It follows from \cite[Th. 9.2]{HKK:02-bis} that the system \eqref{average} is exponentially stable.  

Following similar steps as in \cite[Th. 10.4]{HKK:02-bis}, one can prove that there exists $\varepsilon^*>0$ such that for any $\varepsilon<\varepsilon^*$, $z=0$ is exponentially stable uniformly in $y$ for the system \eqref{coordinated}. Since $x(s)=\Phi(s,s_0)z(s)$ and $\|\Phi\|$ is bounded, then $x=0$ is also exponentially stable uniformly in $y$ for \eqref{change:time:scale}, which completes the proof.

\bibliographystyle{IEEEtran}
\bibliography{\alias,\FP,\Main,\New}

\end{document}